\documentclass[12pt]{amsart}
\usepackage[sans]{dsfont}
\usepackage{amsfonts,amssymb,amsbsy,amsmath,amsthm,enumerate}

\topmargin -1cm
\textheight21.4cm
\textwidth15.7cm
\oddsidemargin 0.5cm
\evensidemargin 0.5cm
\parindent0cm

\numberwithin{equation}{section}

\newtheorem{theorem}{Theorem}[section]

\newtheorem{lemma}[theorem]{Lemma}

\newtheorem{follow}[theorem]{Corollary}

\newtheorem{pr}[theorem]{Proposition}
\theoremstyle{definition}

\newcommand{\bel}{\begin{equation} \label}
\newcommand{\ee}{\end{equation}}
\newcommand{\one}{\mathds{1}}

\newcommand{\hpe}{H_\perp}
\newcommand{\bhpe}{{\mathcal H}_\perp}
\newcommand{\hpa}{H_\parallel}
\newcommand{\bhpa}{{\mathcal H}_\parallel}

\newcommand{\E}{{\mathcal E}}

\newcommand{\rd}{{\mathbb R}^{2}}
\newcommand{\re}{{\mathbb R}}
\newcommand{\R}{{\mathbb R}}
\newcommand{\C}{{\mathbb C}}
\newcommand{\Z}{{\mathbb Z}}
\newcommand{\N}{{\mathbb N}}

\def\beq{\begin{equation}}
\def\eeq{\end{equation}}
\newcommand{\bea}{\begin{eqnarray}}
\newcommand{\eea}{\end{eqnarray}}
\newcommand{\beas}{\begin{eqnarray*}}
\newcommand{\eeas}{\end{eqnarray*}}

\DeclareMathOperator*{\esssup}{ess\,sup}

\sloppy

\begin{document}
\title[Surface Lifshits tails for random quantum Hamiltonians]{Surface Lifshits tails for random quantum Hamiltonians}

\author[W.~Kirsch]{Werner Kirsch}
\author[G.~Raikov]{Georgi Raikov}

\begin{abstract}\setlength{\parindent}{0mm}
We consider Schr\"{o}dinger operators on $L^{2}(\R^{d})\otimes L^{2}(\R^{\ell})$ of the form
\begin{align*}
  H_{\omega}~=~H_{\perp}\otimes I_{\parallel} + I_{\perp} \otimes \hpa + V_{\omega},
\end{align*}
where $H_{\perp}$ and $H_{\parallel}$ are Schr\"{o}dinger operators on $L^{2}(\R^{d})$ and $L^{2}(\R^{\ell})$ respectively, and
\begin{align*}
    V_\omega(x,y) : = \sum_{\xi \in \Z^{d}} \lambda_\xi(\omega) v(x - \xi, y), \quad x \in \re^d, \quad y \in \re^\ell,
\end{align*}
is a random `surface potential'.
We investigate the behavior of the integrated density of surface states of $H_{\omega}$ near the bottom of the spectrum and near internal band edges.

The main result of the current paper is that, under suitable assumptions, the behavior of the integrated density of surface states of $H_{\omega}$ can be
read off from the integrated density of states of a reduced Hamiltonian $H_{\perp}+W_{\omega}$ where $W_{\omega}$ is a quantum mechanical average of $V_{\omega}$ with respect to $y \in \re^\ell$.
We are particularly interested in cases when
$H_{\perp}$ is a magnetic Schr\"{o}dinger operator, but we also recover some of the results from \cite{kw} for non-magnetic $H_{\perp}$.
\end{abstract}

\maketitle

{\bf  AMS 2010 Mathematics Subject Classification:} 82B44, 35J10, 35R60, 47B80, 47N55, 81Q10\\

{\bf  Keywords:} Schr\"odinger operator with constant magnetic field, ergodic random potential, integrated density of surface states, Lifshits tails

\section{Introduction}
\label{s0} \setcounter{equation}{0}

The integrated density of states  is an important quantity in solid states physics. For periodic and ergodic Schr\"{o}dinger operators the integrated density of states has been the object of
intense investigation in the mathematical physics literature for over more than thirty years by now (see e.\,g. the book \cite{pf} or the survey \cite{kme}). In particular,
the behavior near the bottom of the spectrum and near internal band edges has been investigated.

Define the random Schr\"{o}dinger operator $H_{\omega}$ on $L^{2}(\R^{d})$ by
\begin{align}
   H_{\omega}~:=~H(A)+V_{\omega}
\end{align}
where $H(A)$ is the Laplacian with a magnetic potential $A$ and $V_{\omega}$ is a (scalar) random potential. Important examples for $V_{\omega}$ are \emph{Poissonian random potentials}
(see e.\,g. \cite{pf}) and \emph{alloy-type potentials}. We will deal mainly with the latter type of random potentials in this article. An alloy-type potential is of the form (see e.\,g. \cite{km0})
\begin{align}
   V_\omega(x) : = \sum_{\xi \in \Z^{d}} \lambda_\xi(\omega) v(x - \xi),
\end{align}
with independent, identically distributed random variables $\lambda_{\xi}$ on a probability space $\left(\Omega,\mathcal{A},\mathbb{P}\right)$. The function $v$ is called the \emph{single site potential}.

The \emph{integrated density of states} $N(E)$ (see e.\,g. \cite{km}) can be defined for such operators $H_{\omega}$ by
\begin{align}\label{dos}
   N(E)~=~\lim_{L\to\infty} \frac{1}{L^{d}}\;\;{\rm Tr}\,\left( \one_{(-\infty,E)}(H_{\omega, C_{L}^{d}}^{D})\right)
\end{align}
where $C_{L}^{d}$ is the cube around the origin in $\R^{d}$ of side length $L$, $H_{\omega, C_{L}^{d}}^{D}$ is the operator $H_{\omega}$ restricted to $C_{L}^{d}$ with Dirichlet boundary conditions  and $\one_{(-\infty,E)}(H)$ denotes the spectral projection for the operator $H$. It is known that the spectrum of $H_{\omega}$ coincides with the set of growth points of $N$.

For vanishing magnetic potential $A=0$,
the integrated density of states $N(E)$, as a rule, decays exponentially fast near the bottom $E_{1}$ of the spectrum, in fact on a double logarithmic scale
\begin{align}\label{lifbeh}
   N(E)~\sim~e^{-C\,(E-E_{1})^{-\gamma}}\qquad\text{as}\quad E\downarrow E_{1}\,.
\end{align}

The exponent $\gamma$ is called the \emph{Lifshits exponent} and the behavior \eqref{lifbeh} is called the \emph{Lifshits behavior}. The Lifshits exponent
is known to be $\gamma=\frac{d}{2}$ if the single site potential $v$ decays faster than $| x |^{-(d+2)}$ near infinity. If $v$ decays like $| x |^{-\kappa}$ for
$d<\kappa\le d+2$ then $\gamma=\frac{d}{\kappa - d}$.

The same behavior is known at `non-degenerate' internal band edges (see \cite{klopp1}) while
for `degenerate' internal band edges other Lifshits exponents may occur (\cite{kloppW}).\\

The presence of a constant magnetic field changes the behavior of $N(E)$ drastically, already for the free operators $H(A)$. Suppose that the dimension $d$ equals $2$ and the magnetic field $B={\rm curl}\, A$ is
constant. For this case, the integrated density of states of $H(A)$ has a jump at the bottom of the spectrum as long as $B\not=0$ while for $B=0$ the integrated density of states behave like
$E^{\,d/2}$ as $E\downarrow \inf \sigma(H(0))$.

In \cite{klorai} and \cite{klopp} it was shown that for constant magnetic field $B\not=0$ in two dimensions the Lifshits exponent is  $\gamma=\frac{2}{\kappa-2}$ if $v(x)$ behaves like
$|x|^{-\kappa}$ near infinity for \emph{all} $\kappa>2$. If $v(x)$ has at least Gaussian decay then the integrated density of states behaves like $E^{\,|\ln E|}$ on a double logarithmic scale.
We note that analogous results were obtained earlier for Poissonian random potential in \cite{bhkl,erd,erd1}.\\

In the current paper we consider Schr\"{o}dinger operators on $L^{2}(\R^{d})\otimes L^{2}(\R^{\ell})$ of the form
$$
   H_{0}=H_{\perp}\otimes I_{_\parallel} + I_{\perp} \otimes \hpa
   $$
   and
   $$
  H_{\omega}=H_{0} + V_{\omega},
$$
where $H_{\perp}$ and $H_{\parallel}$ are Schr\"{o}dinger operators on $L^{2}(\R^{d})$ and $L^{2}(\R^{\ell})$ respectively, and
\begin{align*}
    V_\omega(x,y) : = \sum_{\xi \in \Z^{d}} \lambda_\xi(\omega) v(x - \xi, y), \quad x \in \re^d, \quad y \in \re^\ell,
\end{align*}
is a random `surface' potential of alloy-type.\\

Suppose for this introduction that $\hpe$ has purely essential spectrum with $\inf\sigma(\hpe)=0$ and $\hpa$ is lower bounded  and has eigenvalues
$E_{j}$ below $\E:=\inf\sigma_{{\rm ess}}(\hpa)$. Assume furthermore that both $v$ and $\lambda_{\xi}$ are non-negative and $\mathbb{P}(\lambda_{\xi}<\varepsilon)>0$ for all $\varepsilon>0$.

The operator is not ergodic with respect to $\mathbb{Z}^{d+\ell}$ but merely with respect to $\mathbb{Z}^{d}$. Nevertheless,
one can prove that the spectrum of $H_{\omega}$ is non-random and the discrete spectrum is empty almost surely (see \cite{ekss}). The integrated density of states $N(E)$ for $H_{\omega}$
can be defined by
\begin{align}
   N(E)~=~\lim_{L\to\infty} \frac{1}{L^{d+\ell}}\;\;{\rm Tr}\,\left( \one_{(-\infty,E)}(H_{\omega, C_{L}^{d+\ell}}^{D})\right)\,,
\end{align}
which is just equation \eqref{dos} with the dimension adjusted.

Since the operator $H_{\omega}$ is not ergodic with respect to $\mathbb{Z}^{d+\ell}$, we can not conclude that the spectrum coincides with the set of growth points of $N$.
In fact, $N(E)=0$ for $E<\E$, but for any $\eta\in\sigma(\hpe)$ and any $j$ we have $\eta+E_{j}\in\sigma(H_{\omega})$ almost surely.

Intuitively speaking, this means that the spectrum around such point is `not dense enough', in the sense that the number $N_{L}(E)$ of eigenvalues of $H_{\omega, C_{L}^{d+\ell}}^{D}$
below $E<\E$ does \emph{not} grow as fast as the volume of $C_{L}^{d+\ell}$.
It is quite reasonable to expect that $N_{L}(E)$ grows rather like $L^{d}$ in the energy region below $\E$.

Thus, we define
\begin{align}
   \nu_{V}(E)~=~\lim_{L\to\infty} \frac{1}{L^{d}}\;\;{\rm Tr}\,\left( \one_{(-\infty,E)}(H_{\omega, C_{L}^{d}\times\R^{\ell}}^{D})\right)
\end{align}
for $E<\E$. In fact, it turns out, that $\nu_{V}(E)$ is well defined under reasonable assumption on $H_{\omega}$. This quantity is called
the \emph{integrated density of surface states}.
The integrated density of surface states was already considered in \cite{ekss} and \cite{ekss1}. In this paper we define $\nu_{V}(E)$ only for $E<\E$. For a discussion of
$\nu_{V}(E)$ for arbitrary $E$ see \cite{ekss,ekss1}.
In the paper \cite{kw} Lifshits tails for the integrated density of surface states were investigated for Schr\"{o}dinger operators without magnetic fields and at the bottom of the spectrum.
We are particularly interested in cases when
$H_{\perp}$ is a magnetic Schr\"{o}dinger operator, but we also recover some known results from \cite{kw} for non-magnetic $H_{\perp}$.
We investigate the behavior of the integrated density of surface states of $H_{\omega}$ near the bottom of the spectrum and near internal band edges.

The main result of the current paper is that under suitable assumptions the behavior of the density of surface states of $H_{\omega}$ can be
read off from the density of states of a reduced Hamiltonian $H_{\perp}+W_{\omega}$ where $W_{\omega}$ is a quantum mechanical average of $V_{\omega}$ with respect to $y \in \R^{\ell}$.
More precisely, if $\psi_{1}$ denotes the ground state of $H_{\parallel}$, then
\begin{align*}
   W_{\omega}(x)~&=~\langle V_{\omega}(x,\cdot)\,\psi_{1},\psi_{1} \rangle\\
   &=~\int_{\R^{\ell}}\;V_{\omega}(x,y)\,|\psi_{1}(y)|^{2}\,dy.
\end{align*}

In particular, we prove that $H_{\omega}$ admits Lifshits tails if $H_{\perp}+W_{\omega}$ does.\\

The article is organized as follows. In the next section we give formal definitions of the operators we deal with, and discuss some of the particular examples we consider important. In Section 3 we prove the existence of the integrated density of surface states, and in Section 4 we estimate it in terms of the integrated density of (bulk) states for a reduced random ergodic operator. Finally, in Section 5 we apply the estimates obtained in Section 4 in order to study the Lifshits tails of the integrated density of surface states for particular random quantum Hamiltonians.

\section{Setting of the problem}
\label{s1} \setcounter{equation}{0}
Let $d \in \N$ and $B = \left\{B_{jk}\right\}_{j,k=1}^d$ be an antisymmetric real matrix. Define the vector field $A = (A_1,\ldots,A_d) : \R^d \to \R^d$ by
$$
A_j(x) : = -\frac{1}{2} \sum_{k=1}^d B_{jk} x_k, \quad j=1,\ldots,d, \quad  x = (x_1,\ldots,x_d) \in \R^d.
$$
Then $B_{jk} = \frac{\partial A_k}{\partial x_j} - \frac{\partial A_j}{\partial x_k}$, $j,k =1,\ldots,d$. Thus, in the sequel $B$  will play the role of a constant magnetic field, while $A$ is a magnetic potential generating $B$. Set
$$
2m : = {\rm dim}\,{\rm Ran}\,B, \quad n : = {\rm dim}\,{\rm Ker}\,B,
$$
so that $d = 2m + n$. Note that we do not exclude the possibility that $m=0$, i.e. $B = 0$; in particular, this is the case if $d=1$. \\
Assume $m > 0$. Let the numbers $b_1 \geq \ldots \geq b_m > 0$ be such that the non-zero eigenvalues of $B$, counted with their multiplicities, coincide with $\pm i b_j$, $j=1,\ldots,m$. Set $\beta : = \sum_{j=1}^m b_j$. If $m = 0$, then $\beta : = 0$. Thus, for all $m \geq 0$, we have $\beta  = {\rm Tr}\,(iB)_+$.\\
 Define the operator $\hpe = \hpe(B): = (i\nabla + A)^2 - \beta$ as the self-adjoint operator generated in the Hilbert space $\bhpe : = L^2(\R^d)$ by the closure of the quadratic form
$$
\int_{\R^d} \left(|i\nabla u  + Au|^2 - \beta |u|^2\right) dx, \quad u \in C_0^\infty(\R^d).
$$
Thus $\hpe$ is just the (shifted) $d$-dimensional Schr\"odinger operator with constant (possibly vanishing) magnetic field.
It is well known that $\hpe$ is essentially self-adjoint on $C_0^\infty(\R^d)$  (see \cite{ls}). Note that the operators
$\hpe(B)$ and $\hpe(-B)$ are anti-unitarily equivalent under complex conjugation, so that their spectra coincide. \\
Let us describe the spectrum $\sigma(\hpe)$ of $\hpe$.
Introduce the (shifted) Landau levels
$$
\Lambda_0 = 0,
$$
$$
\Lambda_{q+1} = \inf{\left\{2\sum_{j=1}^m b_j \ell_j, \; \ell_j \in \Z_+, \; j=1,\ldots,m \; \Big| \; 2\sum_{j=1}^m b_j \ell_j > \Lambda_q\right\}}, \quad q \in \Z_+.
$$
If $n=0$, i.e. if the magnetic field $B$ has a full rank, then $\sigma(\hpe) = \cup_{q=0}^\infty\left\{\Lambda_q\right\}$
and each Landau level $\Lambda_q$, $q \in \Z_+$, is an eigenvalue of $\hpe$ of infinite multiplicity. If $n \geq 1$,
then $\sigma(\hpe)$ is purely absolutely continuous, and
$\sigma(\hpe) = [0,\infty)$. Note however, that if $m \geq 1$, i.e. $B \neq 0$, then the higher Landau levels $\Lambda_q$, $q \in \N$, play the role
of thresholds within $\sigma(\hpe)$, while in the case $m=0$ the only threshold is the origin. \\

Next, let $\bhpa$ be a separable Hilbert space with scalar product $\langle \cdot , \cdot\rangle_{\bhpa}$ and norm $\|\cdot\|_{\bhpa}$, and let $\hpa$ be a linear operator, self-adjoint in $\bhpa$. Assume that
    \bel{d1}
    -\infty < \inf{\sigma(\hpa)} < {\mathcal E} : = \inf {\sigma_{\rm ess}(\hpa)} \leq \infty.
    \ee
    The first inequality in \eqref{d1} just means that $\hpa$ is lower bounded, while the second one implies that there is a number $r \in \left\{1,\ldots,\infty\right\}$ of discrete eigenvalues of $\hpa$ below the bottom ${\mathcal E}$ of its essential spectrum. For notational convenience set
    $$
    {\mathcal J} : = \left\{
    \begin{array} {l}
    \{1\ldots,r\} \quad {\rm if} \quad r<\infty, \\
    \N \quad {\rm if} \quad r = \infty.
    \end{array}
    \right.
    $$
    Let $\left\{E_j\right\}_{j \in {\mathcal J}}$ be the non-decreasing sequence of the eigenvalues of $\hpa$ lying in $(-\infty, {\mathcal E})$.
    If $r=\infty$, then $\lim_{j \to \infty} E_j = {\mathcal E}$.
     If $r < \infty$, we  occasionally  set $E_{r+1} = {\mathcal E}$. Let $\left\{\psi_j \right\}_{j \in {\mathcal J}}$ be an associated orthonormal system of eigenfunctions satisfying
    $$
    \hpa \psi_j = E_j \psi_j, \quad \langle \psi_j, \psi_k \rangle_{\bhpa} = \delta_{jk}, \quad j,k \in {\mathcal J}.
    $$
 Denote by $I_\perp$ (resp., by  $I_\parallel$) the identity in $\bhpe$ (resp., in $\bhpa$). Define the operator
 $$
 H_0 : = \hpe \otimes I_\parallel + I_\perp \otimes \hpa
 $$
 as the closure of the operator defined on ${\rm Dom}(\hpe) \otimes  {\rm Dom}(\hpa)$. Thus, $H_0$ is self-adjoint in the Hilbert space
 ${\mathcal H} : = \bhpe \otimes \bhpa$ (see e.g. \cite[Theorem VIII.33 a]{rs1}). It is well known that the space ${\mathcal H}$ is isometrically isomorphic to $L^2(\re^d; \bhpa)
 = \int_{\re^d}^\oplus \bhpa dx$ under the mapping ${\mathcal K}$, defined originally by ${\mathcal K} : g(x)\otimes\psi \mapsto  g(x)\psi$, $x \in \re^d$,
 for $g \in {\bhpe} = L^2(\re^d)$ and $\psi \in {\bhpa}$, extended then by linearity to finite sums $\sum_j g_j \otimes \psi_j$ with $g_j \in \bhpe$, $\psi_j \in \hpa$,
 and finally extended by continuity to a unitary operator from ${\mathcal H}$ to $L^2(\re^d; \bhpa)$. In the sequel, we will systematically identify
 ${\mathcal H}$ with $L^2(\re^d; \bhpa)$, omitting ${\mathcal K}$ and ${\mathcal K}^*$ in the notations.\\

 If $n \geq 1$, then $\sigma(H_0) = [E_1, \infty)$ is purely absolutely continuous (see e.g. \cite[Subsection 8.2.3]{abmg}), while if $n=0$, then
    \bel{d28}
 \sigma(H_0) \cap (-\infty, {\mathcal E}) = \bigcup_{j \in {\mathcal J}, \, q \in \Z_+ \, : \, E_j + \Lambda_q < {\mathcal E}} \left\{E_j + \Lambda_q\right\},
    \ee
 and the energies $E_j + \Lambda_q < {\mathcal E}$ are isolated eigenvalues of $H_0$ of infinite multiplicity.\\
 Further, we introduce a random perturbation of the operator $H_0$. Let $(\Omega, {\mathcal A}, {\mathbb P})$ be a probability space, ${\mathbb G} = \re$ or ${\mathbb G} = \Z$, and let
 ${\mathbb T} : = \left\{{\mathcal T}_\xi\right\}_{\xi \in {\mathbb G}^d}$ be an ergodic group of measure preserving automorphisms of $\Omega$, homomorphic to ${\mathbb G}^d$.
 Ergodicity of $\mathbb{T}$ means that any set $A\in\mathcal{A}$ which is invariant under all $\mathcal{T}_{\xi}$ has probability $\mathbb{P}(A)=0$ or $\mathbb{P}(A)=1$.\\

    Denote by ${\mathcal L}(\bhpa)$ the space of linear bounded operators in $\bhpa$. Introduce the function
    $$
    \Omega \times \re^d \ni (\omega,x) \mapsto V_\omega(x) \in {\mathcal L}(\bhpa).
    $$
    We suppose that $V_\omega$ satisfies the following assumptions:\\
    $\bf{H_1}$: For each $f,g \in \bhpa$ the function
    $$
    \Omega \times \re^d \ni (\omega,x) \mapsto \langle V_\omega(x) f, g\rangle_{\bhpa} \in \C
    $$
    is measurable with respect to the $\sigma$-algebra ${\mathcal A} \times {\mathcal B}$, where ${\mathcal B}$ is the $\sigma$-algebra of Borel subsets of $\re^d$.\\
    $\bf{H_2}$: We have
    \bel{d2}
    M : = \esssup_{(\omega,x) \in \Omega \times \re^d} \|V_\omega(x)\|_{{\mathcal L}(\bhpa)} < \infty.
    \ee
     $\bf{H_3}$: For almost every $(\omega,x) \in \Omega \times \re^d$, the operator $V_\omega(x)$ is self-adjoint and non-negative in $\bhpa$.\\
     $\bf{H_4}$: The family of operators $V_\omega : = \int_{\re^d}^\oplus V_\omega(x) dx \in {\mathcal L}({\mathcal H})$, $\omega \in \Omega$, is ergodic with respect to the group ${\mathbb T}$, i.e. we have
     $$
     V_\omega(x-\xi) = V_{{\mathcal T}_\xi \omega}(x), \quad x \in \re^d, \quad \xi \in {\mathbb G}^d, \quad {\mathcal T}_\xi \in {\mathbb T}.
     $$
     Introduce the family of operators
     $$
     H_\omega : = H_0 + V_\omega, \quad \omega \in \Omega.
     $$
     By $\bf{H_1}$ --  $\bf{H_3}$, the operator $H_\omega$ is well defined on ${\rm Dom}\,(H_0)$ and self-adjoint in ${\mathcal H}$ for almost every $\omega \in \Omega$. \\

     Let us now describe our leading example. We assume in it that $d=2$, $m=1$, and hence $n=0$. We suppose without loss of generality that $B_{12} > 0$,
     and set $b: = B_{12} = b_1$. Then we have $\Lambda_q = 2bq$, $q \in \Z_+$, (see e.g. \cite{f,l}). Further, we assume that $\bhpa = L^2(\re)$, and
     $\hpa : = - \frac{d^2}{dy^2} + u$, i.e. $\hpa$ is the 1D Schr\"odinger operator with appropriate real-valued potential $u$.
     More precisely, $\hpa$ is the self-adjoint operator generated in $L^2(\re)$ by the closure of the quadratic form
     \bel{d3}
     \int_\re\left(|f'|^2 + u |f|^2\right) dy, \quad f \in  C_0^\infty(\re).
     \ee
     In order that the quadratic form \eqref{d3} be closable and lower bounded in $L^2(\re)$, and that inequalities \eqref{d1} hold true,
     we have to impose additional conditions on $u$. For instance, we may assume that $u \in  L^1(\re) + L^\infty_\epsilon(\re)$, and that there exist
     a constant $c \in (0,\infty)$, and an open non-empty set $S \subset \R$, such that
     $$
     u(y) \leq - c \one_S(y), \quad y \in \re;
     $$
     here and in the sequel $ \one_S$ is the characteristic function of a given set $S$. Another possibility is to assume that $u \in L^1(\re; (1+x^2)dx)$, $u \neq 0$,  and $\int_\re u(y) dy \leq 0$. In both cases, the quadratic form \eqref{d3} is closable and lower bounded, $\sigma_{\rm ess}(\hpa) = [0,\infty)$, and the discrete spectrum $\sigma_{\rm disc}(\hpa)$ of $\hpa$ is non-empty and simple (see e.g. \cite{bir, s1}). A certain generalization of these assumptions is the case where $u = -\alpha \delta$ with  fixed $\alpha > 0$, i.e.  $\hpa$  is the self-adjoint operator generated in $L^2(\re)$ by the closed lower bounded quadratic form
     $$
     \int_\re \,|f'|^2 \, dy - \alpha |f(0)|^2, \quad f \in  {\rm H}^1(\re),
     $$
     where  ${\rm H}^1(\re)$ denotes the first-order Sobolev space on $\re$. In this case again $\sigma_{\rm ess}(\hpa) = [0,\infty)$, and an explicit calculation shows that $\sigma_{\rm disc}(\hpa) = \left\{-\frac{\alpha^2}{4}\right\}$, and $-\frac{\alpha^2}{4}$ is a simple eigenvalue of $\hpa$
     (see e.g. \cite[Chapter I.3, Theorem 3.1.4]{aghkh}). Finally, we might assume that $0 \leq u \in L^\infty_{\rm loc}(\re)$ and  we have $\lim_{|t| \to \infty} \int_{t-\varepsilon}^{t^+\varepsilon}u(y) dy = \infty$ for a given $\varepsilon > 0$. Then again the quadratic form \eqref{d3} is closable and lower bounded, but now the spectrum of $\hpa$ is purely discrete and simple (see e.g. \cite{bir}).
     Thus, in our leading example
     \bel{d10}
H_0  = \left(-i\frac{\partial}{\partial x_1} + \frac{b x_2}{2}\right)^2 + \left(-i\frac{\partial}{\partial x_2} - \frac{b x_1}{2}\right)^2 -
\frac{\partial^2}{\partial y^2} + u(y) - b.
    \ee
    Hence, in this case, $H_0$ is the (shifted) 3D Schr\"odinger operator with constant magnetic field which could be identified with
    the vector ${\bf B} = (0,0,b)$ and electric potential $u = u(y)$; then the electric field ${\bf E} = (0,0,-u'(y))$
    is parallel to the magnetic field ${\bf B}$. Moreover, $(x_1,x_2) \in \re^2$ are the variables on the plane perpendicular to ${\bf B}$,
    while $y \in \re$ is the variable along ${\bf B}$, which explains our notations $\hpe$ and $\hpa$.

    The spectral properties of the operator $H_0$ in \eqref{d10}, perturbed by a
    rapidly decaying non-random electric potential $V$, were discussed in \cite{abbfr}. The problems attacked there were the
    accumulation of resonances and the singularities of the spectral shift function for the pair $(H_0 + V, H_0)$ at the points
    $2bq + E_j$, $q \in \Z_+$, $j \in {\mathcal J}$.\\

    In our other example of $H_0$, which is a special case of the unperturbed operator considered in \cite{kw}, we assume $B = 0$.
    Further, we suppose that $\bhpa = L^2(\re^{\ell})$ with $\ell \in \N$, while $\hpa$ is the self-adjoint operator generated in $\bhpa$
    by the closure of the quadratic form
    \bel{d50a}
    \int_{\re^{\ell}} \left(|\nabla f|^2 + U|f|\right)^2 dy, \quad f \in C_0^\infty(\re^{\ell}),
    \ee
    where $U : \re^{\ell} \to \re$ is an appropriate potential. If, for instance, $U \in L^p(\re^{\ell}) + L_\epsilon^\infty(\re^{\ell})$ with $p=1$ if $\ell =1$, $p>1$ if $\ell =2$,
    and $p=\ell/2$ if $\ell \geq 3$, then
    the quadratic form in \eqref{d50a} is lower bounded and closable, and $\sigma_{\rm ess}(\hpe) = [0,\infty)$. Under suitable assumptions on $U$,
    the discrete spectrum of $\bhpe$ is non-empty, and its smallest eigenvalue $E_1$ is simple (see e.g. \cite{rs4}). Thus, in our second example,
    \bel{d51a}
H_0  = - \Delta_x - \Delta_y + U(y).
    \ee
    {\em Remark}: The operator $H_0$ admits further extensions. For instance, if $Q_{\rm per} \in L^{\infty}(\re^{d}, \re)$ is a
    $\Z^{d}$-periodic function, then we could replace $-\Delta_x$ by
    $-\Delta_x + Q_{\rm per}(x)$.\\

    Next, in both our examples the random perturbation $V_\omega$ of $H_0$   is  the multiplier by an alloy-type electric potential
\bel{2}
    V_\omega(x,y) : = \sum_{\xi \in \Z^{d}} \lambda_\xi(\omega) v(x - \xi, y), \quad \omega \in \Omega, \quad x \in \re^{d}, \quad y \in \re^{\ell},
    \ee
    with $d = 2$ and $\ell = 1$ in the case of a perturbation of \eqref{d10}, and arbitrary $d$, $\ell \in \N$ in the case of a perturbation of \eqref{d51a}.
    The single-site potential $v$ in \eqref{2} is supposed to be Lebesgue measurable and to satisfy
    $$
    c_0^- \one_S(x,y) \leq v(x,y) \leq c_0^+ (1 + |x|)^{-\varkappa}, \quad x \in \re^{d}, \quad  y \in \re^{\ell},
    $$
    with $\varkappa > d$, $0 < c_0^- \leq c_0^+ < \infty$, and an open non-empty set $S \subset \re^{d + \ell}$,
      while the coupling constants $\lambda_\xi$, $\xi \in \Z^{d}$, are i.i.d random variables on $\Omega$ which almost surely are non-negative and bounded. \\

    \section{Existence of the integrated density of surface states}
    \label{s2} \setcounter{equation}{0}
    Our next goal is to introduce the integrated density of surface states for the operator $H_{\omega}$ in the general setting.
    Let ${\mathcal O} \subset \re^d$ be a bounded, open, non-empty set.
    Define $H^{D}_{\perp, {\mathcal O}}$ as the self-adjoint operator generated in $L^2({\mathcal O})$ by the closed non-negative quadratic form
    \bel{d6}
    \int_{{\mathcal O}} \left(|i\nabla f + Af|^2   - \beta)|f|^2\right) dx, \quad f \in {\rm H}_0^1({\mathcal O}),
    \ee
    where ${\rm H}_0^1({\mathcal O})$ is the closure of $C_0^\infty({\mathcal O})$ in ${\rm H}^1({\mathcal O})$. Due to the compactness of the embedding of  ${\rm H}_0^1({\mathcal O})$ into $L^2({\mathcal O})$, the spectrum of the operator $H^{D}_{\perp, {\mathcal O}}$ is purely discrete: Moreover, as already mentioned,
    \bel{d5}
    \inf \sigma(H^{D}_{\perp, {\mathcal O}}) \geq 0.
    \ee

    Denote by $I_{\perp, {\mathcal O}}$ the identity  in $L^2({\mathcal O})$. Define the operator
    $$
    H^{D}_{0, {\mathcal O}} : = H^{D}_{\perp, {\mathcal O}} \otimes I_\parallel + I_{\perp, {\mathcal O}} \otimes \hpa,
    $$
     self-adjoint in $L^2({\mathcal O}) \otimes \bhpa$. Evidently, the spectrum of $H^{D}_{0, {\mathcal O}}$ on $(-\infty, {\mathcal E})$ is discrete.\\
     Further, due to ${\bf H_1}$ - ${\bf H_3}$, the operator $H^{D}_{\omega, {\mathcal O}} : = H^{D}_{0, {\mathcal O}} + V_{\omega}$
     is almost surely well defined on ${\rm Dom}(H^{D}_{0, {\mathcal O}})$, and self-adjoint in $L^2({\mathcal O}) \otimes \bhpa \cong L^2({\mathcal O};\bhpa)$.
     Due to \eqref{d5} and the non-negativity of $V_\omega$, we almost surely have
     $$
     \inf \sigma(H^{D}_{\omega, {\mathcal O}}) \geq E_1.
     $$
     For $E \in (-\infty, {\mathcal E})$,  and $\omega \in \Omega$ consider the quantity
     $$
    N(H^D_{\omega,{\mathcal O}}; E) : = {\rm Tr}\,\one_{(-\infty, E)}(H^{D}_{\omega, {\mathcal O}}),
    $$
    where, in accordance with our general notations, $\one_{(-\infty, E)}(H^{D}_{\omega,{\mathcal O}})$ is the spectral projection of the operator $H^{D}_{\omega, {\mathcal O}}$ corresponding to $(-\infty, E)$. Thus, ${\rm Tr}\,\one_{(-\infty, E)}(H^{D}_{\omega, {\mathcal O}})$  is the number of the eigenvalues of $H^{D}_{\omega,{\mathcal O}}$ smaller than $E$, and counted with their multiplicities.\\
    Pick $L \in (0,\infty)$, and set
    $C_L : = \left(-\frac{L}{2}, \frac{L}{2}\right)^d$, ${\mathcal Z}(C_L) : = \inf \sigma\left(H_{\perp, C_L}^D\right)$.\\

    In the sequel we will need the following simple
    \begin{lemma} \label{l1}
    The function $(0,\infty) \ni L \mapsto  {\mathcal Z}(C_L) \in (0,\infty)$ is decreasing, and
    \bel{d38}
    \lim_{L \to \infty} {\mathcal Z}(C_L) = 0.
    \ee
    \end{lemma}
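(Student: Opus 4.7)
The plan is to prove the three assertions---strict positivity, monotonicity, and vanishing in the limit---in order of increasing subtlety.

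Monotonicity follows from a standard extension-by-zero argument. If $0 < L_1 < L_2$, then $C_{L_1} \subset C_{L_2}$, and any $u \in {\rm H}_0^1(C_{L_1})$ extends by zero to an element of ${\rm H}_0^1(C_{L_2})$ with the same value of the quadratic form \eqref{d6}. Since $\mathcal{Z}(C_L)$ is the infimum of the associated Rayleigh quotient, this gives $\mathcal{Z}(C_{L_2}) \leq \mathcal{Z}(C_{L_1})$. The same trick combined with the inequality $(i\nabla + A)^2 \geq \beta$ on $L^2(\R^d)$ shows $\mathcal{Z}(C_L) \geq 0$.

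Strict positivity will be proved by contradiction. The compactness of the embedding ${\rm H}_0^1(C_L) \hookrightarrow L^2(C_L)$ makes the resolvent of $H^{D}_{\perp, C_L}$ compact, so if $\mathcal{Z}(C_L)=0$ the infimum is attained by a nontrivial eigenfunction $u_L \in {\rm H}_0^1(C_L)$. Its extension $\tilde u_L \in {\rm H}^1(\R^d)$ by zero then satisfies $\langle \tilde u_L, \hpe \tilde u_L \rangle = 0$, and since $\hpe \geq 0$ this forces $\hpe \tilde u_L = 0$ on $\R^d$. I then split according to the rank of $B$: if $n \geq 1$, then $\sigma(\hpe)$ is purely absolutely continuous, so $0$ is not an eigenvalue and $\tilde u_L \equiv 0$; if $n = 0$, the null space of $\hpe$ coincides with the lowest Landau level, whose elements are (in an appropriate gauge) products of an entire function on $\C^m$ with a Gaussian factor, so they enjoy the unique continuation property and vanishing on $\R^d \setminus C_L$ again forces $\tilde u_L \equiv 0$. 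Either conclusion contradicts the nontriviality of $u_L$.

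For the limit \eqref{d38}, I will construct trial functions of arbitrarily small Rayleigh quotient. In the full-rank case $n=0$, pick a Gaussian lowest-Landau-level vector $\phi_0 \in L^2(\R^d)$ and a smooth cutoff $\chi_L \in C_0^\infty(C_L)$ equal to $1$ on $C_{L/2}$ with derivatives of order $L^{-1}$; a commutator computation shows that $\chi_L \phi_0 \in {\rm H}_0^1(C_L)$ has Rayleigh quotient tending to $0$, since the error terms are supported in $C_L \setminus C_{L/2}$, where $\phi_0$ decays exponentially, while $\|\chi_L \phi_0\|^2 \to \|\phi_0\|^2>0$. When $n \geq 1$, Weyl's criterion applied to $0 \in \sigma(\hpe)$ produces compactly supported approximate eigenfunctions, which after translation lie in ${\rm H}_0^1(C_L)$ once $L$ is large enough. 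The main obstacle is the strict positivity in the magnetic case: the naive comparison $\hpe \geq -\Delta - \beta$ is useless because the first Dirichlet eigenvalue of $-\Delta$ on $C_L$ falls below $\beta$ for large $L$, and the rigidity provided by unique continuation (when $n=0$) or by absolute continuity of the spectrum (when $n\geq 1$) is what replaces this comparison.
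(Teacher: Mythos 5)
Your proposal is correct, and it is genuinely different from the paper's own proof. The paper proceeds by an explicit reduction: after rotating to coordinates in which $B$ is block-diagonal, it writes ${\mathcal Z}(\tilde{C}_L) = \sum_{j=1}^m(\zeta_j(L) - b_j) + n\pi^2 L^{-2}$ for an axis-aligned cube, cites \cite[Proposition 4.1]{erd} both for the strict positivity $\zeta_j(L) > b_j$ and for the precise asymptotics $\lim_{L\to\infty}L^{-2}\ln(\zeta_j(L)-b_j)=-b_j/(2\pi)$, and then transfers to arbitrarily oriented cubes via the sandwich ${\mathcal Z}(\tilde{C}_{\sqrt{d}L}) \le {\mathcal Z}(C_L) \le {\mathcal Z}(\tilde{C}_{L/\sqrt{d}})$. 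Your argument avoids the dimensional reduction and the appeal to Erd\H{o}s's quantitative estimate: you obtain strict positivity from a compactness-plus-rigidity contradiction (absolute continuity of $\sigma(\hpe)$ when $n\ge 1$, holomorphic unique continuation on the lowest Landau level when $n=0$), and the vanishing limit from trial functions supplied either by cutting off a lowest-Landau-level Gaussian or by a Weyl/core-approximation argument. You also explicitly treat the monotonicity claim by extension-by-zero, which the paper's written proof skips over as routine. The trade-off is information versus self-containedness: the paper's route yields, as a byproduct, the exponential rate at which ${\mathcal Z}(C_L)\to 0$ (which is not needed for the lemma but is in the spirit of the later Lifshits-tail analysis), whereas your proof establishes exactly what the lemma asserts with only soft inputs. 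One small remark: in the case $n\ge 1$, your Weyl-sequence construction in fact subsumes the $n=0$ cutoff construction, since $C_0^\infty(\R^d)$ is a form core for $\hpe$ in all cases and $0=\inf\sigma(\hpe)$ always; a single core-approximation argument would streamline the proof of \eqref{d38} without the case split.
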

    \begin{proof}
    If $B=0$, then  ${\mathcal Z}(C_L) = d \pi^2 L^{-2}$ which implies \eqref{d38}. If $B \neq 0$, then in $\re^d$ there exist Cartesian coordinates such that
    $$
    B = \left\{
    \begin{array} {l}
    \bigoplus_{j=1}^m \left(
    \begin{array} {cc}
    0 & b_j\\
    -b_j & 0
    \end{array}
    \right) \quad {\rm if} \quad n=0,\\
    \left(\bigoplus_{j=1}^m \left(
    \begin{array} {cc}
    0 & b_j\\
    -b_j & 0
    \end{array}
    \right)\right) \bigoplus {\mathbb O}_n \quad {\rm if} \quad n \geq 1,
    \end{array}
    \right.
    $$
    where ${\mathbb O}_n$ is the zero $n \times n$ matrix (see e.g. \cite{mohrai}). If the sides of the cube $\tilde{C}_L \subset \re^d$, centered at the origin, are parallel to the coordinate
    hyperplanes corresponding to this coordinate system, then we have
    \bel{d50}
    {\mathcal Z}(\tilde{C}_L) = \sum_{j=1}^m(\zeta_j(L) - b_j) + n\pi^2 L^{-2},
    \ee
    where $\zeta_j(L)$, $j=1,\ldots,m$, is the smallest eigenvalue of the self-adjoint operator generated in $L^2\left(S_L\right)$ with
    $S_L: = \left(-\frac{L}{2}, \frac{L}{2}\right)^2$
    by the closed non-negative quadratic form
    $$
    \int_{S_L} \left(\left|i\frac{\partial f}{\partial x_1} - \frac{b_j x_2}{2} f\right|^2 +
    \left|i\frac{\partial f}{\partial x_2} + \frac{b_j x_1}{2} f\right|^2\right) dx, \quad f \in {\rm H}_0^1\left(S_L\right).
    $$
    By \cite[Proposition 4.1]{erd}, we have $\zeta_j(L) > b_j$ if $L \in (0,\infty)$, and
    \bel{d51}
    \lim_{L \to \infty} \frac{\ln{(\zeta_j(L) - b_j)}}{L^2} = - \frac{b_j}{2\pi}, \quad j=1,\ldots,m.
    \ee
    Finally, for any cube $C_L$ centered at the origin, we have
    \bel{d52}
    {\mathcal Z}(\tilde{C}_{\sqrt{d} L}) \leq {\mathcal Z}(C_L) \leq {\mathcal Z}(\tilde{C}_{L/\sqrt{d}}), \quad L \in (0,\infty).
    \ee
    Now \eqref{d38} in the case $B \neq 0$ follows from \eqref{d50} - \eqref{d52}.
    \end{proof}
    \begin{theorem} \label{th1}
    Assume ${\bf H_1} - {\bf H_4}$. Then there exists a left-continuous non-decreasing function $\nu_V : (-\infty,{\mathcal E}) \to [0,\infty)$ and a set $\Omega_0 \in {\mathcal A}$ of full probability, i.e. ${\mathbb P}(\Omega_0) = 1$, such that for each $\omega \in \Omega_0$ we have
    \bel{d23}
    \lim_{L \to \infty} L^{-d} {\rm Tr}\,\one_{(-\infty, E)}(N(H^D_{\omega,C_L}; E)) = \nu_V(E)
    \ee
    at the continuity points $E \in (-\infty, {\mathcal E})$ of $\nu_V$.
    \end{theorem}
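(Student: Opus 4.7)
The plan is to combine a Dirichlet bracketing argument with a superadditive ergodic theorem on $\mathbb{G}^d$, adapting the bulk Schr\"odinger IDS template to the surface setting by carrying out the counting on $L^2(C_L;\bhpa)$ rather than on $L^2(C_L)$, so that the transverse factor $\bhpa$ is retained throughout.

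First I would verify that, for every $E < {\mathcal E}$, the counting function $F_L(E,\omega) := N(H^D_{\omega,C_L}; E)$ is almost surely finite and bounded by $c(E)L^d$ with a deterministic constant. Since $V_\omega \geq 0$ by ${\bf H_3}$, the min-max principle yields $F_L(E,\omega) \leq N(H^D_{0,C_L};E)$, and the tensor-product structure of $H^D_{0,C_L}$ lets me write this as $\sum_{k\in{\mathcal J}} N(H^D_{\perp,C_L}; E - E_k)$. Only the finitely many summands with $E_k < E$ contribute, since $\inf\sigma(H^D_{\perp,C_L})\geq 0$, and each of them is $O(L^d)$ by a standard Weyl-type bound for the magnetic Dirichlet operator. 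Next, for $\ell \in \mathbb{G}\cap(0,\infty)$ and $L = n\ell$ with $n \in \N$, I partition $C_L$ into $n^d$ translates $C_\ell^{(\xi)} := \xi\ell + C_\ell$, $\xi$ ranging over an appropriate subset of $\mathbb{G}^d$. Imposing extra Dirichlet conditions on the interior walls only raises eigenvalues, hence
\[
F_{n\ell}(E,\omega) \;\geq\; \sum_\xi N(H^D_{\omega,C_\ell^{(\xi)}}; E).
\]
Using magnetic translations, which are unitary and intertwine the form \eqref{d6} with its translate up to a gauge phase, together with ${\bf H_4}$, each summand is unitarily equivalent to $F_\ell(E,{\mathcal T}_{\xi\ell}\omega)$, so $\{F_{n\ell}(E,\cdot)\}$ is a superadditive process over $\mathbb{G}^d$-translated cubes, bounded above by $c(E)(n\ell)^d$.

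The Akcoglu--Krengel superadditive ergodic theorem then gives, for each fixed $\ell$ and each fixed rational $E < {\mathcal E}$,
\[
\lim_{n\to\infty} (n\ell)^{-d} F_{n\ell}(E,\omega) \;=\; \sup_{\ell'>0} (\ell')^{-d}\,{\mathbb E}[F_{\ell'}(E,\cdot)] \;=:\; \nu_V(E)
\]
almost surely, the limit being deterministic by ergodicity of $\mathbb{T}$. Intersecting the full-measure sets over a countable dense set of energies produces a single $\Omega_0 \in {\mathcal A}$ with ${\mathbb P}(\Omega_0)=1$, and I extend the resulting monotone function to $(-\infty,{\mathcal E})$ by left-continuity to obtain $\nu_V$. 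Monotonicity of $F_L$ in $E$ and, via domain monotonicity of Dirichlet forms, in $L$ then permits a sandwich argument $L_k \leq L < L_{k+1}$ with $L_k = k\ell$ to upgrade convergence to continuous $L \to \infty$ and to every continuity point of $\nu_V$, yielding \eqref{d23}.

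The principal obstacle I expect is the interaction of the magnetic gauge with the partition step: ordinary translations do not leave the form \eqref{d6} invariant, so one must use magnetic translations and verify that the resulting equivariance is truly compatible with ${\bf H_4}$. A secondary technical point is uniform-in-$L$ control of $F_L$ despite the infinite-dimensional transverse factor $\bhpa$; this is handled by observing that for $E < {\mathcal E}$ only finitely many eigenvalues $E_j$ of $\hpa$ lie below $E$, reducing the bound to a finite sum of magnetic Dirichlet eigenvalue counts on cubes in $\re^d$.
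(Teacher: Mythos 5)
Your proposal is correct and follows essentially the same route as the paper: superadditivity from Dirichlet bracketing, covariance via magnetic translations and ${\bf H_4}$, the uniform $O(L^d)$ bound obtained by reducing to the finitely many transverse eigenvalues $E_j<E$ and a Weyl-type estimate for $H^D_{\perp,C_L}$, and then the Akcoglu--Krengel superadditive ergodic theorem followed by a countable-density and monotonicity argument. The only place the paper is more explicit is in the $O(L^d)$ bound, where it passes through the Birman--Schwinger principle, the diamagnetic inequality, and a Schatten-class interpolation estimate instead of invoking a ``standard Weyl-type bound'' outright.
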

   {\em Remarks}: (i)  The function $\nu_V$ is called the integrated density of surface states (IDSS) for the operator  $H_{\omega}$.
   Since it is non-decreasing, the set of its discontinuity points is  countable. By definition, $\nu_V$ is non-random.  As mentioned in the Introduction, he IDSS for
   non-magnetic quantum Hamiltonians was first introduced in \cite{ekss} where its general properties were studied in detail. A further
   development of the theory of the IDSS can be found in \cite{kw}.\\
    (ii)
     Since we define the quadratic form \eqref{d6} on ${\rm H}_0^1({\mathcal O})$, it is natural to call $\nu_V$ the {\em Dirichlet} IDSS.
     Let us discuss briefly the possibility to introduce also a {\em Neumann} IDSS.
     Define the operator $H^{N}_{\perp, C_L}$ as the self-adjoint operator generated in $L^2(C_L)$ by the closed lower bounded quadratic form
    $$
    \int_{C_L} \left(|i\nabla f + Af|^2   - \beta)|f|^2\right) dx, \quad f \in {\rm H}^1(C_L).
    $$
    Again, the spectrum of $H^{N}_{\perp, C_L}$ is purely discrete. However, if $m>0$ and, for instance,  the sides of $C_L$ are parallel to the hyperplanes corresponding to the coordinate system described in the proof of Lemma \ref{l1}, then \cite[Theorem 1.2]{bon} easily implies
     $$
\inf \sigma(H^{N}_{\perp, C_L}) = (\Theta-1) \beta + O(L^{-1/2}), \quad L \to \infty,
$$
with a constant $\Theta \in (0,1)$ independent of $B$ and $L$ (see also \cite{hm} for a related result in the case where $C_L$ is replaced by a domain with a smooth boundary). Therefore, $\inf \sigma(H^{N}_{\perp, C_L}) < 0$ for $L$ large enough. Hence, if we assume that ${\mathcal E} < \infty$, and introduce the operators
$$
H^{N}_{0, C_L} : = H^{N}_{\perp, C_L} \otimes I_\parallel + I_{\perp, C_L} \otimes \hpa, \quad H^{N}_{\omega, C_L} : = H^{N}_{0, C_L} + V_{\omega, C_L},
$$
we find that $\inf \sigma_{\rm ess}(H^{N}_{0, C_L}) < {\mathcal E}$, and, generally speaking, we cannot rule out the possibility that $\inf \sigma_{\rm ess}(H^{N}_{\omega, C_L}) < {\mathcal E}$. In such a case,
$$
{\rm Tr}\,\one_{(-\infty, E)}(H^{N}_{\omega, C_L}) = \infty,
$$
if $E \in \left(\inf \sigma_{\rm ess}(H^{N}_{\omega, C_L}), {\mathcal E}\right)$, and the Neumann IDSS would not be well defined, at least not for all energies
$E \in (-\infty, {\mathcal E})$. That is why we do not consider it in the present article. Note, however, that if $m=0$, i.e. $B = 0$, then
$\inf \sigma(H^{N}_{\perp, L}) = 0$, the Neumann IDSS is correctly defined, and under generic assumptions it coincides with the Dirichlet IDSS
(see \cite{kw}). Also, if $m \geq 0$, and ${\mathcal E} = \infty$, the Neumann IDSS would be well defined. \\

\begin{proof}[Proof of Theorem \ref{th1}{\rm :}] We follow the general lines of the proof of \cite[Proposition 2.4]{kw}. Our goal is to check that
the stochastic process $N(H^{D}_{\omega,C_L};E)$ indexed by the cubes $C_L + \xi \subset \rd$ with
$L \in (0,\infty)$ and $\xi \in \re^d$ if ${\mathbb G} = \re$, or with
$L \in \N$ and $\xi \in \Z^d$ if ${\mathbb G} = \Z$, satisfies the hypotheses of the Akcoglu-Krengel theorem (see \cite{ak});
then we can argue as in the proof of \cite[Theorem 3.2]{km}. \\
First, if ${\mathcal O}_1 \cap {\mathcal O}_2 = \emptyset$, and ${\mathcal O} : = \left(\overline{{\mathcal O}_1 \cup {\mathcal O}_2}\right)^{\rm Int}$, then
$$
N(H^{D}_{\omega,{\mathcal O}};E) \geq N(H^{D}_{\omega,{\mathcal O}_1};E) + N(H^{D}_{\omega,{\mathcal O}_2};E).
$$

Define the family $\left\{\tau_\xi\right\}_{\xi \in \re^d}$ of magnetic translations by
    \bel{d11}
    (\tau_\xi f)(x) := \exp{\left(-\frac{i}{2} \sum_{j, k = 1}^d \xi_j B_{j k} x_k\right)} f(x-\xi),
    \quad \xi \in \re^d, \quad x \in \re^d, \quad f \in L^2(\re^d).
    \ee
    Thus, $\tau_\xi$, $\xi \in \re^d$, is a unitary operator in $\bhpe$. On $C^1(\re^d)$ we have
    \bel{d12}
\tau_\xi \left(-i\frac{\partial}{\partial x_j} - A_j\right) \tau_\xi^* = -i\frac{\partial}{\partial x_j} - A_j, \quad \xi \in \re^d, \quad j=1,\ldots,d.
    \ee
The restriction $\tau_{\xi, C_L}$ onto $L^2(C_L)$ of the magnetic translation  $\tau_{\xi}$ (see \eqref{d11}), is a unitary operator form $L^2(C_L)$ onto $L^2(C_L + \xi)$, and a bijection form ${\rm Dom}(H^{D}_{\perp,C_L})$ onto ${\rm Dom}(H^{D}_{\perp,C_L + \xi})$. Similarly, $\tau_{\xi, C_L} \otimes I_\parallel$ is a unitary operator form $L^2(C_L) \otimes \bhpa$ onto $L^2(C_L + \xi) \otimes \bhpa$, and a bijection form ${\rm Dom}(H^{D}_{0,C_L})$ onto ${\rm Dom}(H^{D}_{0,C_L + \xi})$.  By \eqref{d12} and ${\bf H_4}$, we have
    $$
\left(\tau_{\xi, C_L} \otimes I_\parallel\right) H_{\omega, C_L + \xi} \left(\tau_{\xi, C_L} \otimes I_\parallel\right)^* = H_{{\mathcal T}_\xi \omega, C_L}, \quad \xi \in {\mathbb G}^d.
    $$
    Therefore,
    $$
    N(H^{D}_{\omega,C_L + \xi};E)  = N(H^{D}_{{\mathcal T}_\xi \omega,C_L};E), \quad \xi \in {\mathbb G}^d.
    $$
It remains to check that
    \bel{5}
    \sup_{L \in (0,\infty)}L^{-d}{\mathbb E}(N(H^{D}_{\omega,C_L};E)) < \infty,
    \ee
    where ${\mathbb E}$ denotes the expectation with respect to the probability measure $d{\mathbb P}$. By the non-negativity of $V_\omega$ (see ${\bf H_3}$), and \eqref{d5}, we have almost surely
    \bel{d9}
    N(H^{D}_{\omega,C_L};E) \leq N(H^{D}_{0,C_L};E) = \sum_{j\in {\mathcal J} \, : \, E_j < E} {\rm Tr}\,\one_{(-\infty, E-E_j)}(H^D_{\perp, C_L}), \quad
    E \in (-\infty, {\mathcal E}).
     \ee
     Further, the minimax principle easily implies
     \bel{d15}
      {\rm Tr}\,\one_{(-\infty, E-E_j)}(H^D_{\perp, C_L}) \leq {\rm Tr}\,\one_{(-\infty, 0)}(H_{\perp} + \beta + 1 - \eta \one_{C_L})), \quad E_j < E.
      \ee
      with $\eta : = \beta + 1 + E - E_j$ and $E_j < E$.
   Next, for a compact linear operator $G$ in a separable Hilbert space, and for $s>0$, set
    $$
    n_*(s;G) : = {\rm Tr}\, \one_{(s^2,\infty)}(G^*G).
    $$
    Thus, $n_*(s;G)$ is the number of the singular values of $G$, greater than $s>0$, and counted with their multiplicities.
    Then the Birman-Schwinger principle (see e.g. \cite[Lemma 1.1]{bir}), implies
     \bel{8}
    {\rm Tr}\,\one_{(-\infty, 0)}(H_{\perp} + \beta + 1 - \eta \one_{C_L}) =
    n_*(\eta^{-1/2}; \one_{C_L} (H_\perp + \beta + 1)^{-1/2}).
    \ee
    Let $p > d$ be an even integer number. Then it follows from an elementary Chebyshev-type estimate, and the diamagnetic inequality (see e.g. \cite{ahs}), that
    $$
    n_*(\eta^{-1/2}; \one_{C_L} (H_\perp + \beta + 1)^{-1/2}) \leq \eta^{p/2} \| \one_{C_L} (H_\perp + \beta + 1)^{-1/2}\|_p^p \leq
    $$
    \bel{9}
    \eta^{p/2} \| \one_{C_L} (-\Delta  + 1)^{-1/2}\|_p^p,
    \ee
    where $\|G\|_p : = \left({\rm Tr}\,(G^*\,G)^{p/2}\right)^{1/p}$, $p \in [1,\infty)$, denotes the norm of the operator $G$ in the
    $p$th Schatten-von Neumann class.
    A standard interpolation result (see e.g. \cite[Theorem 4.1]{s}), implies
    \bel{10}
    \| \one_{C_L} (-\Delta  + 1)^{-1/2}\|_p^p \leq
    (2\pi)^{-d}  \int_{\re^d} (|\xi|^2 + 1)^{-p/2}d\xi\, L^d.
    \ee
   Now \eqref{5} follows from \eqref{d9} - \eqref{10}.

    \end{proof}

\section{Estimates of the IDSS}
\label{s3} \setcounter{equation}{0}
Introduce the function $\Omega \times \re^d \ni (\omega,x) \mapsto W_\omega(x) \in [0,\infty)$.
 In this section we define the  integrated density of bulk states ${\mathcal N}_W$ for a reduced operator $\hpe + W_\omega$ with
 certain $W_\omega$ related to $V_\omega$, and estimate the IDSS
 $\nu_V$  in terms of  ${\mathcal N}_W$.\\

    Assume that $W_\omega$ satisfies the hypotheses ${\bf H_1} - {\bf H_4}$ with
    $\bhpa = \C$. For $E \in \re$ set
    \bel{d17}
{\mathcal N}_W(E) : = {\mathbb E}\left({\rm Tr}\,\left(\one_{C_1} \one_{(-\infty,E)}(\hpe + W_\omega) \one_{C_1}\right)\right).
    \ee
Thus, ${\mathcal N}_W$ is the usual integrated density of states (IDS) for the random ${\mathbb G}^d$-ergodic operator $\hpe + W_\omega$.
Due to the ergodicity of $H_{0,\perp} + W_\omega$, there exists a set $\Sigma \subset \re$ such that almost surely
$$
\sigma(\hpe + W_\omega) = \Sigma,
$$
and
    \bel{sof2}
\Sigma = {\rm supp}\,d{\mathcal N}_W
    \ee
(see \cite{km2, pf}).
The IDS ${\mathcal N}_W$ admits a representation as a thermodynamic limit of normalized finite-volume eigenvalue counting functions:

\begin{theorem} \label{th3} {\rm \cite{dim},\cite[Theorem 3.1]{hlmw}} Assume that $W_\omega$ satisfies ${\bf H_1} -{\bf H_4}$ with $\bhpa = \C$.
Then almost surely
    \bel{d22}
\lim_{L \to \infty} L^{-d} {\rm Tr}\,\one_{(-\infty,E)}(H^D_{\perp,C_L} + W_\omega) = {\mathcal N}_W(E),
    \ee
at the points of continuity $E \in \re$ of ${\mathcal N}_W(E)$.
\end{theorem}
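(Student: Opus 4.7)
My plan is to repeat the argument used in the proof of Theorem~\ref{th1}, specialised to the scalar case $\bhpa = \C$, so that $H_0 = \hpe$ and $H_\omega = \hpe + W_\omega$ is a genuinely $\mathbb{G}^d$-ergodic magnetic Schr\"odinger operator on $L^2(\re^d)$. The scheme will be to verify the hypotheses of the Akcoglu--Krengel superadditive ergodic theorem \cite{ak} for the process $N_L(\omega,E) := {\rm Tr}\,\one_{(-\infty,E)}(H^D_{\perp,C_L} + W_\omega)$ indexed by translates of cubes $C_L + \xi$.

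First, I would check Dirichlet superadditivity: for disjoint bounded open sets $\mathcal{O}_1,\mathcal{O}_2$ with $\mathcal{O}=(\overline{\mathcal{O}_1\cup \mathcal{O}_2})^{\rm Int}$, extension by zero of trial functions yields $N(H^D_{\perp,\mathcal{O}} + W_\omega; E) \geq N(H^D_{\perp,\mathcal{O}_1} + W_\omega; E) + N(H^D_{\perp,\mathcal{O}_2} + W_\omega; E)$. Second, magnetic covariance: the translations \eqref{d11}--\eqref{d12} combined with $\mathbf{H_4}$ give $N(H^D_{\perp,C_L+\xi} + W_\omega; E) = N(H^D_{\perp,C_L} + W_{\mathcal{T}_\xi\omega}; E)$ for $\xi\in\mathbb{G}^d$, so the process is $\mathbb{G}^d$-stationary. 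Third, the uniform moment bound $\sup_L L^{-d}\mathbb{E}(N_L(\omega,E))<\infty$ follows by a direct specialisation of \eqref{d9}--\eqref{10}: since $W_\omega \geq 0$ by $\mathbf{H_3}$, the minimax principle and the Birman--Schwinger identity \eqref{8} reduce the bound to a Schatten-class estimate for $\one_{C_L}(\hpe+\beta+1)^{-1/2}$, which the diamagnetic inequality controls by $C(E)L^d$. Akcoglu--Krengel then delivers an almost sure non-random limit $\widetilde{\mathcal{N}}(E) := \lim_L L^{-d} N_L(\omega,E)$ at continuity points.

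The main obstacle will be identifying $\widetilde{\mathcal{N}}(E)$ with $\mathcal{N}_W(E)$ from \eqref{d17}. Magnetic covariance of $\hpe + W_\omega$ on the whole space together with the $\mathbb{G}^d$-stationarity of $W_\omega$ gives $L^{-d}\mathbb{E}({\rm Tr}(\one_{C_L}\one_{(-\infty,E)}(\hpe + W_\omega)\one_{C_L})) = \mathcal{N}_W(E)$ for every admissible $L$, so the real content is the comparison of this full-space localised trace with the finite-volume Dirichlet count. I would carry out this comparison by a Dirichlet--Neumann bracketing on a boundary layer of $C_L$ of width $o(L)$, combined with a resolvent / Helffer--Sj\"ostrand functional calculus argument to pass from traces of resolvents to traces of spectral projectors; the boundedness of $W_\omega$ from $\mathbf{H_2}$ together with the diamagnetic inequality ensures that the surface contribution is $O(L^{d-1})$ and hence negligible after division by $L^d$. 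At continuity points of $\mathcal{N}_W$ the two densities then coincide, yielding \eqref{d22}.
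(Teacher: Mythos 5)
The paper does not prove Theorem~\ref{th3}: it is cited verbatim from \cite{dim} and \cite[Theorem 3.1]{hlmw} and used as an external black-box result, so there is no internal proof of the paper to compare your attempt against.

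That said, your sketch is in the spirit of what the cited references do, and the first half of it is sound. Steps one through three (Dirichlet superadditivity, magnetic covariance via \eqref{d11}--\eqref{d12}, uniform moment bound via \eqref{d9}--\eqref{10}) are a correct specialisation to $\bhpa=\C$ of the Akcoglu--Krengel argument the paper gives for Theorem~\ref{th1}; this yields an almost-sure, non-random limit. The entire content of the theorem, however, lies in your last paragraph --- identifying that limit with the expectation formula \eqref{d17} --- and this is where the proposal has a genuine gap. The assertion that ``the surface contribution is $O(L^{d-1})$'' is not a consequence of $\mathbf{H_2}$ and the diamagnetic inequality; it \emph{is} the theorem, in the sense that proving this surface estimate is precisely the technical achievement of \cite{hlmw} (via heat-kernel comparison and trace estimates) and of \cite{dim}. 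In particular, a Dirichlet--Neumann bracketing step has to control the (shifted) Neumann magnetic Laplacian $H^N_{\perp,C_L}$, whose bottom is \emph{strictly negative}, of order $(\Theta-1)\beta$ with $\Theta\in(0,1)$ --- exactly the phenomenon the paper discusses in the remark after Theorem~\ref{th1}. Showing that the resulting negative Neumann boundary modes contribute only $o(L^d)$ is a delicate quantitative estimate, not a soft ``boundary layer of width $o(L)$'' count, and invoking Helffer--Sj\"ostrand calculus does not by itself resolve it. So the outline correctly names the strategy but leaves the key lemma unproved; the honest conclusion is that one must invoke \cite{dim} or \cite{hlmw} at this step, which is exactly what the paper does.
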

If $W_\omega=0$, then \eqref{d17} easily implies
    \bel{d16}
    {\mathcal N}_0(E) = \left\{
    \begin{array} {l}
    \frac{\omega_d}{(2\pi)^d} E_+^{d/2} \quad {\rm if} \quad m=0, d=n \geq 1,\\
    \frac{b_1\ldots b_m}{(2\pi)^m}\frac{\omega_n}{(2\pi)^n} \sum_{q=0}^\infty \mu_q (E-\Lambda_q)_+^{n/2} \quad {\rm if} \quad m \geq 1, n \geq 1,\\
    \frac{b_1\ldots b_m}{(2\pi)^m} \sum_{q=0}^\infty \mu_q \one_{(-\infty,E)}(\Lambda_q) \quad {\rm if} \quad m \geq 1, n = 0.
    \end{array}
    \right.
    \ee
    Here $\omega_d : = \frac{\pi^{d/2}}{\Gamma(d/2 + 1)}$, $\Gamma$ being the Euler gamma function, is the volume of the unit ball in $\re^d$, $d \geq 1$, and
  $$
\mu_q : = \#\left\{(l_1,\ldots,l_m) \in \Z_+^m \, | \, 2\sum_{j=1}^m b_j l_j = \Lambda_q\right\}, \quad q \in \Z_+,
$$
is the multiplicity of the Landau level $\Lambda_q$, $q \in \Z_+$. Note that if $n \geq 1$, then ${\mathcal N}_0$ is continuous on $\re$,
while if $n=0$, its discontinuity points are the Landau levels $\Lambda_q$, $q \in \Z_+$. Moreover it is easy to check that for any $d = 2m + n \geq 1$ we have
    \bel{d18a}
    \lim_{E \to \infty} E^{-d/2} {\mathcal N}_0(E) =  \frac{\omega_d}{(2\pi)^d};
    \ee
    in particular, the semi-classical asymptotic coefficient $\frac{\omega_d}{(2\pi)^d}$ is independent of the magnetic field $B$. \\
    Further, denote by $\rho : (-\infty, {\mathcal E}) \to \Z_+$ the eigenvalue counting function for the operator $\hpa$, i.e.
    $$
    \rho(E) : = {\rm Tr}\,\one_{(-\infty,E)}(\hpa), \quad E \in (-\infty,{\mathcal E}).
    $$
    Set
    \bel{d18}
    ({\mathcal N}_0 * d\rho)(E) := \sum_{j \in {\mathcal J}}{\mathcal N}_0(E-E_j), \quad E \in (-\infty,{\mathcal E}).
    \ee
 Note that ${\mathcal N}_0(E) \neq 0$ if and only if $E>0$; therefore, only the terms in \eqref{d18} which correspond to eigenvalues $E_j < E$ do not vanish.
 Since $E < {\mathcal E}$, the non-vanishing terms in \eqref{d18} are finitely many.
 \begin{pr} \label{pr1} Assume that $V_\omega$ satisfies ${\bf H_1} - {\bf H_4}$. Then we have
    \bel{d19}
    ({\mathcal N}_0 * d\rho)(E-M) \leq \nu_V(E) \leq ({\mathcal N}_0 * d\rho)(E), \quad E \in (-\infty, {\mathcal E}).
    \ee
    \end{pr}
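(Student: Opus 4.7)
My plan is to sandwich $H_{\omega,C_L}^D$ between two deterministic operators whose eigenvalue counting functions can be evaluated explicitly in the limit $L \to \infty$, and then invoke Theorem \ref{th3} with $W_\omega = 0$ to identify the limit with ${\mathcal N}_0$.

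\textbf{Step 1 (tensor decomposition of the unperturbed box Hamiltonian).} For $E < {\mathcal E}$ I would first establish the identity
\[
N(H_{0,C_L}^D;E) \;=\; \sum_{j \in {\mathcal J},\; E_j < E} N(H_{\perp,C_L}^D;\,E-E_j).
\]
The point is that $L^2(C_L) \otimes \bhpa$ splits as $\bigoplus_{j \in {\mathcal J}} L^2(C_L) \otimes \text{span}(\psi_j) \;\oplus\; L^2(C_L) \otimes \bhpa_{\rm ess}$, where $\bhpa_{\rm ess}$ is the spectral subspace of $\hpa$ associated with $[{\mathcal E},\infty)$. On the $j$-th summand $H_{0,C_L}^D$ acts as $H_{\perp,C_L}^D + E_j$; on the last summand, using $H_{\perp,C_L}^D \geq 0$ (cf.\ \eqref{d5}), the spectrum is bounded below by ${\mathcal E}$, so it contributes nothing below $E$. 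Dividing by $L^d$ and invoking Theorem \ref{th3} with $W_\omega \equiv 0$, whose IDS is exactly ${\mathcal N}_0$ in \eqref{d16}, yields, at continuity points,
\[
\lim_{L \to \infty} L^{-d} N(H_{0,C_L}^D;E) \;=\; \sum_{j \in {\mathcal J},\; E_j < E} {\mathcal N}_0(E - E_j) \;=\; ({\mathcal N}_0 * d\rho)(E).
\]

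\textbf{Step 2 (upper bound).} Assumption $\bf{H_3}$ gives $V_\omega \geq 0$ on ${\mathcal H}$, hence on $L^2(C_L;\bhpa)$ after restriction. Therefore $H_{\omega,C_L}^D \geq H_{0,C_L}^D$ in the quadratic form sense, and the mini-max principle yields $N(H_{\omega,C_L}^D;E) \leq N(H_{0,C_L}^D;E)$. Dividing by $L^d$, using Theorem \ref{th1} on the left and the identity from Step 1 on the right (at a common continuity point of $\nu_V$ and $({\mathcal N}_0 * d\rho)$), produces the right-hand inequality in \eqref{d19}.

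\textbf{Step 3 (lower bound).} Assumption $\bf{H_2}$ gives $\|V_\omega\|_{{\mathcal L}({\mathcal H})} \leq M$, hence the operator inequality $V_\omega \leq M \, I$ on $L^2(C_L;\bhpa)$, so $H_{\omega,C_L}^D \leq H_{0,C_L}^D + M\,I$. The mini-max principle then gives $N(H_{\omega,C_L}^D;E) \geq N(H_{0,C_L}^D;E-M)$. Dividing by $L^d$ and passing to the limit yields the left-hand inequality in \eqref{d19}.

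\textbf{Step 4 (from continuity points to all points).} Both $\nu_V$ and $({\mathcal N}_0 * d\rho)$ are non-decreasing on $(-\infty,{\mathcal E})$, so their common continuity points are dense. The inequalities \eqref{d19} at these dense points extend to all $E \in (-\infty,{\mathcal E})$ by left-continuity of $\nu_V$ (Theorem \ref{th1}) and a standard monotone limit. The only mild subtlety is choosing $E$ (resp.\ $E-M$) to be a continuity point of every one of the finitely many terms $\mathcal{N}_0(\cdot - E_j)$ with $E_j < E$ that appear in the convolution; since each $\mathcal{N}_0$ has at most countably many jumps (the Landau levels when $n=0$), this is no obstacle and the monotone extension completes the proof. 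The main conceptual point, and the only non-bookkeeping ingredient, is the orthogonal decomposition in Step 1 that turns the bound states of $\hpa$ into independent decoupled copies of $H_{\perp,C_L}^D$.
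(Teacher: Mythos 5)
Your proof is correct and follows essentially the same route as the paper's: operator sandwich $H_{0,C_L}^D \leq H_{\omega,C_L}^D \leq H_{0,C_L}^D + M$ via $\bf{H_2}$--$\bf{H_3}$, the orthogonal decomposition $N(H_{0,C_L}^D;E)=\sum_{E_j<E} N(H_{\perp,C_L}^D;E-E_j)$ (the paper records this in \eqref{d21}), Theorem~\ref{th3} with $W_\omega\equiv 0$ to identify the limit with ${\mathcal N}_0$, and a monotone/left-continuity argument to pass from common continuity points to all $E<{\mathcal E}$. The only cosmetic difference is that you spell out the spectral subspace $\bhpa_{\rm ess}$ explicitly, whereas the paper leaves that implicit.
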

    \begin{proof}
    The mini-max principle and hypotheses ${\bf H_2} -{\bf H_3}$ easily imply that almost surely
    \bel{d20}
    N(H^D_{0,C_L}; E-M) \leq N(H^D_{\omega,C_L}; E) \leq N(H^D_{0,C_L}; E), \quad E \in (-\infty, {\mathcal E}).
    \ee
    On the other hand,
    \bel{d21}
    N(H^D_{0,C_L}; E) = \sum_{j \in {\mathcal J} \, : \, E_j<E} \one_{(-\infty, E_j-E)}(H^D_{\perp,C_L}), \quad E \in (-\infty, {\mathcal E}).
    \ee
    Now if $E \in (-\infty, {\mathcal E})$ is a common continuity point of the functions ${\mathcal N}_0 * d\rho$, $\nu_V$, and
    $({\mathcal N}_0 * d\rho)(\cdot - M)$, then \eqref{d19} follows from \eqref{d20} - \eqref{d21}, combined with \eqref{d23}, \eqref{d22}, and \eqref{d18}.
    In order to prove \eqref{d19} for general $E \in (-\infty, {\mathcal E})$, we apply an appropriate limiting argument, taking into account that the three functions
    ${\mathcal N}_0 * d\rho$, $\nu_V$, and
    $({\mathcal N}_0 * d\rho)(\cdot - M)$ are left continuous and non-decreasing so that the set of their discontinuity points is countable.
    \end{proof}
    As an immediate application of Proposition \ref{pr1}, we have the following
    \begin{follow} \label{f1}
    Assume that $V_\omega$ satisfies ${\bf H_1} - {\bf H_4}$. Let ${\mathcal E} = \infty$. Suppose that there exist constants
    $\theta \in (0,\infty)$ and $C \in (0,\infty)$, such that
    \bel{d24}
    \rho(E) = C E^\theta (1 + o(1)), \quad E \to \infty.
    \ee
    Then we have
    \bel{d25}
    \nu_V(E) = C \frac{d\theta}{d+2\theta} {\rm B}(d/2,\theta) \frac{\omega_d}{(2\pi)^d} E^{\frac{d}{2} + \theta}(1 + o(1)), \quad E \to \infty,
    \ee
    where ${\rm B}$ is the Euler beta function.
    \end{follow}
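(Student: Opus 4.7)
The plan is to apply Proposition \ref{pr1} and reduce the corollary to the asymptotic evaluation of the Stieltjes convolution $({\mathcal N}_0 * d\rho)(E)$ as $E \to \infty$. Since $M$ is a fixed constant and $d/2 + \theta > 0$, one has $(E-M)^{d/2+\theta} = E^{d/2+\theta}(1+o(1))$, so the same leading asymptotic will control both sides of \eqref{d19} and then \eqref{d25} will follow from the sandwich. The core task is thus to show
\[
({\mathcal N}_0 * d\rho)(E) = C\, \frac{d\theta}{d+2\theta}\, {\rm B}(d/2,\theta)\, \frac{\omega_d}{(2\pi)^d}\, E^{d/2+\theta}(1+o(1)), \quad E \to \infty.
\]

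The strategy for this asymptotic is a direct $\epsilon$-$T$-$S$ argument. Fix $\epsilon > 0$. By \eqref{d18a} pick $S > 0$ such that $|{\mathcal N}_0(s) - \frac{\omega_d}{(2\pi)^d}s^{d/2}| \leq \epsilon\, s^{d/2}$ for $s \geq S$, and by \eqref{d24} pick $T > 0$ such that $|\rho(t) - Ct^\theta| \leq \epsilon\, t^\theta$ for $t \geq T$. I split the sum $({\mathcal N}_0 * d\rho)(E) = \sum_{E_j < E}{\mathcal N}_0(E-E_j)$ into three regimes: the terms with $E - E_j < S$, contributing at most ${\mathcal N}_0(S)[\rho(E) - \rho(E-S)] = O(E^\theta)$; the terms with $E_j < T$, contributing at most $\rho(T){\mathcal N}_0(E) = O(E^{d/2})$; and the terms with $T \leq E_j \leq E - S$, which form the main bulk. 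The first two contributions are $o(E^{d/2+\theta})$.

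In the main regime I replace ${\mathcal N}_0(E-E_j)$ by its asymptote $\frac{\omega_d}{(2\pi)^d}(E-E_j)^{d/2}$ at the cost of a multiplicative factor $1 + O(\epsilon)$, and then apply Stieltjes integration by parts in $\rho$; the boundary contributions are again $o(E^{d/2+\theta})$. What remains, after using $\rho(t) = Ct^\theta(1+O(\epsilon))$ and extending the integration range with negligible error, is
\[
(1+O(\epsilon))\, \frac{d\, \omega_d\, C}{2(2\pi)^d}\, \int_0^E (E-t)^{d/2-1}\, t^\theta\, dt.
\]
The substitution $t = Eu$ turns the integral into $E^{d/2+\theta}\,{\rm B}(d/2, \theta+1)$, and the elementary gamma-function identity
\[
\tfrac{d}{2}\,{\rm B}(d/2,\theta+1) \;=\; \tfrac{d}{2}\cdot\tfrac{\theta}{d/2+\theta}\,{\rm B}(d/2,\theta) \;=\; \tfrac{d\theta}{d+2\theta}\,{\rm B}(d/2,\theta)
\]
yields the desired prefactor. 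Sending $\epsilon \to 0$ after $E \to \infty$ finishes the proof.

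The main obstacle is the case $n=0$, $m \geq 1$, where ${\mathcal N}_0$ is a step function on the Landau levels rather than absolutely continuous, so one cannot differentiate ${\mathcal N}_0$ directly. The averaged approximation by the smooth envelope $\frac{\omega_d}{(2\pi)^d}s^{d/2}$ nevertheless holds by \eqref{d18a}, and performing the integration by parts only in $\rho$ (never in ${\mathcal N}_0$) keeps the argument uniformly valid across the regimes $n \geq 1$ and $n=0$.
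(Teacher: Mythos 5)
Your proof is correct, and it takes a genuinely different route from the paper's. The paper dispatches \eqref{d25} in one line by sandwiching via \eqref{d19} and then invoking the Karamata Tauberian theorem: one passes to Laplace transforms, where $({\mathcal N}_0 * d\rho)$ factorizes into $\widehat{{\mathcal N}_0}(t)\,\widehat{\rho}(t)$, feeds the asymptotics \eqref{d18a} and \eqref{d24} through an Abelian lemma to get a $t^{-(d/2+\theta)}$ blow-up as $t\downarrow 0$, and then uses Karamata to invert. Your argument is a direct Abelian computation on the convolution itself: you split the sum $\sum_{E_j<E}{\mathcal N}_0(E-E_j)$ into the three ranges $E-E_j<S$, $E_j<T$, and the bulk, show the first two are $O(E^\theta)+O(E^{d/2})=o(E^{d/2+\theta})$, replace ${\mathcal N}_0$ by its power envelope in the bulk up to a factor $1+O(\epsilon)$, integrate by parts once in $\rho$, insert \eqref{d24}, and land on $\frac{d\omega_d C}{2(2\pi)^d}\int_0^E(E-t)^{d/2-1}t^\theta\,dt = \frac{d\omega_d C}{2(2\pi)^d}\,{\rm B}(d/2,\theta+1)\,E^{d/2+\theta}$, which, using $\frac{d}{2}{\rm B}(d/2,\theta+1)=\frac{d\theta}{d+2\theta}{\rm B}(d/2,\theta)$, reproduces the constant in \eqref{d25} exactly. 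I checked the boundary terms and the range-extension errors: they are $O(E^{d/2})$ and $O(E^\theta)$ respectively, both negligible. Your observation that the integration by parts must be done in $\rho$ and never in ${\mathcal N}_0$ is the right way to keep the argument uniform across $n\geq 1$ (where ${\mathcal N}_0$ is continuous) and $n=0$ (where it is a Landau-level step function), since only the envelope asymptotic \eqref{d18a} is used. The trade-off is transparency versus brevity: the Tauberian route is shorter but outsources the work to a cited black box, whereas your $\epsilon$-$T$-$S$ argument is longer but entirely self-contained and makes the origin of the beta-function constant manifest.
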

    \begin{proof}
    Asymptotic relation \eqref{d25} follows easily from \eqref{d19}, \eqref{d18a}, \eqref{d24}, and the Karamata Tauberian theorem (see the original work \cite{kar} or \cite[Problem 14.2]{shu}).
    \end{proof}
    Our next goal is to estimate $\nu_V$ for energies $E$ close to the lower edges of the bands of ${\rm supp}\,d\nu_V$, i.e. close to the upper edges of the gaps in ${\rm supp}\,d\nu_V$. First, we estimate $\nu_V$ for energies $E$ close to $E_1 = \inf \, {\rm supp}\,d\nu_V$.
    Note that \eqref{d19} implies that $\inf {\rm supp}\,d\nu_V \geq E_1$.
    Assume that $E_1$ is a simple eigenvalue of $\hpa$. Set
    $$
    W_{1,\omega}(x) : = \langle V_\omega(x) \psi_1, \psi_1\rangle_{\bhpa}, \quad x \in \re^d, \quad \omega \in \Omega.
    $$
    Evidently, if $V_\omega$ satisfies hypotheses ${\bf H_1} - {\bf H_4}$ with arbitrary separable Hilbert space $\bhpa$,
    then $W_{1,\omega}$ meets these conditions for $\bhpa = \C$.
    \begin{theorem} \label{th5}
     Assume $V_\omega$ satisfies hypotheses ${\bf H_1} - {\bf H_4}$, and that $E_1$ is a simple eigenvalue of $\hpa$.
     Let $\lambda_* \in (0,E_2-E_1)$ and $\delta \in \left(\frac{M}{M+ E_2-E_1-\lambda_*},1\right)$. Then we have
     \bel{d26}
     {\mathcal N}_{W_1}(\lambda) \leq \nu_V(E_1 + \lambda) \leq {\mathcal N}_{(1-\delta)W_1}(\lambda), \quad \lambda \in (0,\lambda_*].
     \ee
    \end{theorem}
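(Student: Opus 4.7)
The plan is to derive the inequalities in \eqref{d26} at the level of finite-volume eigenvalue counting functions on cubes $C_L$, and then pass to the thermodynamic limit via Theorem~\ref{th1} for $\nu_V$ and Theorem~\ref{th3} for $\mathcal{N}_{W_1}$ and $\mathcal{N}_{(1-\delta)W_1}$ (both reduced potentials inherit $\bf{H_1}$--$\bf{H_4}$ with $\bhpa = \C$ from $V_\omega$). The \emph{lower bound} is the easier direction: the product ansatz $u = f \otimes \psi_1$ with $f \in {\rm H}_0^1(C_L)$ gives, via $\hpa \psi_1 = E_1 \psi_1$ and the definition of $W_{1,\omega}$,
\begin{equation*}
\langle u, H^D_{\omega, C_L} u\rangle = \langle f, H^D_{\perp, C_L} f\rangle + E_1 \|f\|^2 + \langle f, W_{1,\omega} f\rangle.
\end{equation*}
Taking $V_0$ to be the span of the first $k := N(H^D_{\perp, C_L} + W_{1,\omega}; \lambda)$ eigenfunctions of the reduced operator, the subspace $V_0 \otimes \psi_1$ is $k$-dimensional and the quadratic form of $H^D_{\omega, C_L} - E_1 - \lambda$ is strictly negative on it; min-max yields $N(H^D_{\omega, C_L}; E_1 + \lambda) \geq k$.

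For the \emph{upper bound} I perform a Feshbach-style decomposition. Let $P_1 := |\psi_1\rangle\langle \psi_1|$ act in $\bhpa$; since $E_1$ is simple, $P_1$ commutes with $\hpa$ and $\hpa \geq E_2$ on $\mathrm{Ran}(P_1^\perp)$. For $u$ in the form domain, write $u = u_1 + u_\perp$ with $u_1 = (I \otimes P_1) u = f\otimes\psi_1$; the kinetic form splits, and using $H^D_{\perp, C_L} \geq 0$ by \eqref{d5} together with the gap gives $\langle u, H^D_{0, C_L} u\rangle \geq \langle f, H^D_{\perp, C_L} f\rangle + E_1 \|f\|^2 + E_2 \|u_\perp\|^2$. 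For the potential, the Peter--Paul inequality applied to $\|V_\omega^{1/2} u_1 + V_\omega^{1/2} u_\perp\|^2$ with parameter $\delta \in (0,1)$, combined with $\|V_\omega^{1/2} u_\perp\|^2 \leq M \|u_\perp\|^2$ from $\bf{H_2}$, gives $\langle u, V_\omega u\rangle \geq (1-\delta) \langle f, W_{1,\omega} f\rangle - \frac{(1-\delta) M}{\delta} \|u_\perp\|^2$. Combining,
\begin{equation*}
\langle u, (H^D_{\omega, C_L} - E_1 - \lambda) u\rangle \geq \langle f, (H^D_{\perp, C_L} + (1-\delta) W_{1,\omega} - \lambda) f\rangle + c(\delta, \lambda) \|u_\perp\|^2,
\end{equation*}
with $c(\delta, \lambda) := E_2 - E_1 - \lambda - (1-\delta) M/\delta$; an elementary calculation shows that $c(\delta,\lambda) > 0$ for all $\lambda \in (0,\lambda_*]$ precisely when $\delta > M/(M + E_2 - E_1 - \lambda_*)$, which is the hypothesis.

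Under this strict positivity, the linear map $u \mapsto f = \langle \psi_1, u(\cdot)\rangle_{\bhpa}$ is injective on any subspace $V$ where the form of $H^D_{\omega, C_L} - E_1 - \lambda$ is strictly negative (otherwise $f=0$ would force $u = u_\perp$ and the displayed inequality would collapse to $0 > c(\delta,\lambda) \|u\|^2 \geq 0$). Its image is a subspace of ${\rm H}_0^1(C_L)$ of the same dimension on which the form of $H^D_{\perp, C_L} + (1-\delta) W_{1,\omega} - \lambda$ is strictly negative, so min-max yields $N(H^D_{\omega, C_L}; E_1 + \lambda) \leq N(H^D_{\perp, C_L} + (1-\delta) W_{1,\omega}; \lambda)$. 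Dividing both finite-volume inequalities by $L^d$, letting $L \to \infty$ at common continuity points, and extending by left-continuity and monotonicity of the three non-decreasing functions involved, delivers \eqref{d26}. The main obstacle is the upper bound: the off-diagonal coupling of $V_\omega$ between the $P_1$ and $P_1^\perp$ sectors must be absorbed against the transverse gap $E_2 - E_1$, and the specific threshold on $\delta$ is precisely the break-even point of this trade-off.
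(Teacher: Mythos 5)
Your proposal is correct and follows essentially the same approach as the paper: decompose with respect to the rank-one projection onto $\psi_1$ in the $\bhpa$-fiber, identify the $P_1$-block of $H^D_{\omega,C_L}$ with $H^D_{\perp,C_L}+W_{1,\omega}+E_1$, use a Cauchy--Schwarz/Peter--Paul inequality with parameter $\delta$ to absorb the off-diagonal coupling $2\,{\rm Re}\,P_1V_\omega Q_1$, and exploit the gap $E_2-E_1$ (together with $\mathcal Z(C_L)\geq 0$) so that the $Q_1$-sector contributes nothing below $E_1+\lambda_*$. The paper phrases the upper bound as an operator inequality $H^D_{\omega,C_L}\geq P_1(H^D_{0,C_L}+(1-\delta)V_\omega)P_1+Q_1(H^D_{0,C_L}+(1-\delta^{-1})V_\omega)Q_1$ and then adds the two block counting functions, while you argue at the quadratic-form level and use injectivity of $u\mapsto f$ on a negative subspace; these are equivalent formulations of the same min-max argument, and your threshold computation for $\delta$ matches the paper's.
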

 \begin{proof}
    Introduce the orthogonal projection $P_1: L^2(C_L;\bhpa) \to L^2(C_L;\bhpa)$ by
$$
(P_1f)(x) = \langle f(x), \psi_1 \rangle_{\bhpa} \psi_1 \quad x \in C_L, \quad f \in L^2(C_L; \bhpa).
$$
 Set $Q_1 : = I - P_1$, and
$$
{\mathcal D}_1 : = P_1\,{\rm Dom}\,(H^{D}_{0,C_L}), \quad {\mathcal C}_1 : = Q_1\,{\rm Dom}\,(H^{D}_{0,C_L}).
$$
It is easy to see that ${\mathcal D}_1 \subset {\rm Dom}\,(H^{D}_{0,C_L}) = {\rm Dom}\,(H^{D}_{\omega,C_L})$ and ${\mathcal C}_1 \subset
{\rm Dom}\,(H^{D}_{\omega,C_L})$. We will consider  $\left(P_1 H^{D}_{\omega,C_L} P_1\right)_{|{\mathcal D}_1}$
(resp., $\left(Q_1 H^{D}_{\omega,C_L} Q_1\right)_{|{\mathcal C}_1}$) as a self-adjoint operator in the Hilbert space $P_1 L^2(C_L; \bhpa)$ (resp., $Q_1 L^2(C_L; \bhpa)$).
Now note that
 the operator $\left(P_1 H^{D}_{\omega,C_L} P_1\right)_{|{\mathcal D}_1}$ is unitarily equivalent to the operator $H_{\perp, C_L}^{D} + W_{1,\omega} + E_1$.
 More precisely,
    \bel{14}
{\mathcal U}^* \left(\left(P_1 H^{D}_{\omega,C_L} P_1\right)_{|{\mathcal D}_1}\right) {\mathcal U} = H_{\perp, C_L}^{D} + W_{1,\omega} + E_1,
    \ee
where ${\mathcal U} : L^2(C_L) \to P_1 L^2(C_L; \bhpa)$ is the unitary operator defined by
$$
({\mathcal U} g)(x) : = g(x) \psi_1, \quad x \in C_L,  \quad g \in L^2(C_L).
$$
Moreover, we have
    \bel{15}
    \inf \sigma\left(\left(Q_1 H^{D}_{0,C_L} Q_1\right)_{|{\mathcal C}_1}\right) = E_2 + {\mathcal Z}(C_L).
    \ee

Let $\lambda \in (0,\lambda_*]$ with $\lambda_* \in (0,E_2-E_1)$. The mini-max principle and \eqref{14} entail
$$
N(H^{D}_{\omega, C_L}; E_1 + \lambda) \geq
$$
    \bel{16}
    {\rm Tr}\,\one_{(-\infty, E_1+\lambda)}\left(\left(P_1 H^{D}_{\omega, C_L} P_1\right)_{|{\mathcal D}_1}\right) =
    {\rm Tr}\,\one_{(-\infty, \lambda)}(H_{\perp, C_L}^{D} + W_{1,\omega}).
    \ee
    Pick $\delta \in \left(\frac{M}{M+ E_2-E_1-\lambda_*},1\right)$. Then the operator inequality
    $$
    H^{D}_{\omega, C_L} =
    $$
    $$
    P_1\left(H^{D}_{0, C_L} + V_\omega\right)P_1 + Q_1\left(H^{D}_{0, C_L} + V_\omega\right)Q_1 + 2 {\rm Re}\, P_1 V_\omega Q_1 \geq
    $$
    $$
    P_1\left(H^{D}_{0, C_L} + (1-\delta)V_\omega\right)P_1 + Q_1\left(H^{D}_{0, C_L} + (1-\delta^{-1})V_\omega\right)Q_1,
    $$
    combined with the mini-max principle and \eqref{14}, implies
    $$
    N(H^{D}_{\omega, C_L}; E_1 + \lambda) \leq
$$
$$
{\rm Tr}\,\one_{(-\infty, E_1+\lambda)}\left(\left(P_1 \left(H^{D}_{0, C_L} + (1-\delta)V_\omega\right)P_1\right)_{|{\mathcal D}_1}\right) +
$$
$$
{\rm Tr}\,\one_{(-\infty, E_1+\lambda)}\left(\left(Q_1 \left(H^{D}_{0, C_L} + (1-\delta^{-1})V_\omega\right)Q_1\right)_{|{\mathcal C}_1}\right) \leq
$$
    \bel{17}
    {\rm Tr}\,\one_{(-\infty, \lambda)}\left(H_{\perp,C_L}^{D} + (1-\delta)W_{1,\omega}\right) +
    {\rm Tr}\,\one_{(-\infty, E_1+\lambda + (\delta^{-1}-1)M)}\left(\left(Q_1 H^{D}_{0,C_L}Q_1\right)_{|{\mathcal C}_1}\right).
    \ee
    Now note that our choice of $\lambda$ and $\delta$ implies
    $E_1+\lambda + (\delta^{-1}-1)M < E_2$.
    Therefore, by \eqref{15}, we have
    \bel{18}
    {\rm Tr}\,\one_{(-\infty, E_1+\lambda + (\delta^{-1}-1)M)}\left(\left(Q_1 H^{D}_{0,C_L}Q_1\right)_{|{\mathcal C}_1}\right) = 0, \quad \lambda \in (0,\lambda_*].
    \ee
    Now, the lower bound in \eqref{d26} follows from \eqref{16} while the upper bound follows form
    \eqref{17} - \eqref{18} combined with \eqref{d22} and \eqref{d23}.
    \end{proof}

     Our next goal is to estimate the IDSS $\nu_V$ near energies which play the role of upper edges of internal gaps of ${\rm supp}\,d\nu_V$.
     Assume that $n = 0$ and $E_1 + \Lambda_1 > {\mathcal E}$. Then by \eqref{d28} we have
     $$
     \sigma(H_0) \cap (-\infty,{\mathcal E}) = \bigcup_{j \in {\mathcal J}} \left\{E_j\right\},
     $$
     and the energies $E_j$ are eigenvalues of $H_0$ of infinite multiplicity. Assume that $r \geq 2$, and there exists $j \in {\mathcal J}$, $j \geq 2$, such that
     \bel{d30}
     E_{j-1} < E_j < E_{j+1}.
     \ee
      Moreover, assume that
      \bel{d31}
      M < E_j - E_{j-1}.
      \ee
       By \eqref{d19}, \eqref{d18}, and \eqref{d16} with $n=0$, the IDSS $\nu_V$
     is constant on the interval $[E_j-M, E_j]$. More precisely,
     \bel{d29}
     \nu_V(E) = (j-1) \frac{b_1 \ldots b_m}{(2\pi)^m}, \quad   E \in [E_j-M, E_j].
     \ee
     Thus, we are going to estimate the difference $\nu_V(E_j+\lambda)-\nu_V(E_j)$ for $\lambda > 0$ small enough. Set
    $$
    W_{j,\omega}(x) : = \langle V_\omega(x) \psi_j, \psi_j\rangle_{\bhpa}, \quad x \in \re^d, \quad \omega \in \Omega.
    $$

     \begin{theorem} \label{th6}
     Assume $V_\omega$ satisfies hypotheses ${\bf H_1} - {\bf H_4}$, $r \geq 2$, and there exists $j \in {\mathcal J}$, $j \geq 2$, such that
     \eqref{d30} and \eqref{d31} hold true.
     Let $\delta_- \in \left(\frac{M}{E_j-E_{j-1}-M},\infty\right)$, $\lambda_* \in \left(0,\min\left\{E_{j+1}-E_j, (1+\delta_-^{-1})M\right\}\right)$, and
     $\delta_+ \in \left(\frac{M}{M+ E_{j+1}-E_j-\lambda_*},1\right)$. Then we have
     \bel{d32}
     {\mathcal N}_{(1+\delta_-)W_j}(\lambda) \leq \nu_V(E_j + \lambda) - \nu_V(E_j) \leq {\mathcal N}_{(1-\delta_+)W_j}(\lambda), \quad \lambda \in (0,\lambda_*].
     \ee
    \end{theorem}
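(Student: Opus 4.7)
\emph{Proof plan.} My plan is to adapt the proof of Theorem \ref{th5}, replacing the ground-state projection $P_1$ by the $j$-th channel projection $(P_j f)(x) := \langle f(x),\psi_j\rangle_{\bhpa}\,\psi_j$, $f\in L^2(C_L;\bhpa)$, together with $Q_j := I - P_j$. Both projections commute with $H_0$ because $\psi_j$ is a simple eigenvector of $\hpa$, so $H_0 = P_j H_0 P_j + Q_j H_0 Q_j$. Combining this with $V_\omega \geq 0$ and the Cauchy--Schwarz estimate $\pm(P_j V_\omega Q_j + Q_j V_\omega P_j) \leq \varepsilon P_j V_\omega P_j + \varepsilon^{-1} Q_j V_\omega Q_j$ (for every $\varepsilon > 0$), I obtain, for every $\delta_- > 0$ and $\delta_+ \in (0,1)$, the two-sided block-diagonal operator inequality
\begin{align*}
H^D_{\omega,C_L} &\geq P_j\bigl(H_0 + (1-\delta_+)V_\omega\bigr)P_j + Q_j\bigl(H_0 + (1-\delta_+^{-1})V_\omega\bigr)Q_j, \\
H^D_{\omega,C_L} &\leq P_j\bigl(H_0 + (1+\delta_-)V_\omega\bigr)P_j + Q_j\bigl(H_0 + (1+\delta_-^{-1})V_\omega\bigr)Q_j.
\end{align*}
The mini-max principle applied to this block structure splits $N(H^D_{\omega,C_L};E_j+\lambda)$ into a ``parallel'' count on $\mathcal{D}_j := P_j\,\mathrm{Dom}(H^D_{0,C_L})$ and a ``perpendicular'' count on $\mathcal{C}_j := Q_j\,\mathrm{Dom}(H^D_{0,C_L})$.

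The parallel count is handled by the unitary equivalence \eqref{14} with $\psi_1$ replaced by $\psi_j$, which identifies $(P_j(H_0 + c V_\omega)P_j)_{|\mathcal{D}_j}$ with $H^D_{\perp,C_L} + c W_{j,\omega} + E_j$ for every scalar $c$; after dividing by $L^{-d}$ and invoking Theorem \ref{th3}, this contribution converges to $\mathcal{N}_{(1-\delta_+)W_j}(\lambda)$ in the upper bound and to $\mathcal{N}_{(1+\delta_-)W_j}(\lambda)$ in the lower bound. For the perpendicular count I use $0 \leq V_\omega \leq M$ to bound the perturbed operator above and below by $(Q_j H_0 Q_j)_{|\mathcal{C}_j}$ shifted by the scalar $\pm(\delta_\pm^{-1}-1)M$, and then decompose $(Q_j H_0 Q_j)_{|\mathcal{C}_j}$ as a direct sum over $k \in \mathcal{J}\setminus\{j\}$ of $H^D_{\perp,C_L}+E_k$, together with a piece whose spectrum lies in $[{\mathcal Z}(C_L)+\E,\infty)$. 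Consequently the perpendicular count becomes a sum of terms of the form $N(H^D_{\perp,C_L};\,E_j+\lambda\pm(\delta_\pm^{-1}-1)M-E_k)$ plus an essential-spectrum contribution.

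The core of the proof is to verify that the hypotheses on $\delta_\pm$ and $\lambda_*$ have been tuned precisely so that this perpendicular sum collapses, after normalisation, to the baseline $\nu_V(E_j) = (j-1)\frac{b_1\cdots b_m}{(2\pi)^m}$ given by \eqref{d29}. Indeed, $\delta_+ > M/(M+E_{j+1}-E_j-\lambda_*)$ is equivalent to $E_j+\lambda_*+(\delta_+^{-1}-1)M < E_{j+1}$, which pushes every $k \geq j+1$ threshold and every essential-spectrum threshold below zero and kills the corresponding terms in the upper bound; symmetrically, $\delta_- > M/(E_j-E_{j-1}-M)$ is equivalent to $E_j-E_{j-1}>(1+\delta_-^{-1})M$, which keeps the surviving $k \leq j-1$ thresholds in the lower bound strictly positive, while the constraint $\lambda_* < (1+\delta_-^{-1})M$ plays the analogous role at the other end. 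Using $n=0$ and $E_1+\Lambda_1 > \E$, formula \eqref{d16} then gives $\mathcal{N}_0 \equiv \frac{b_1\cdots b_m}{(2\pi)^m}$ on the Landau gap $(0,\Lambda_1)$, into which all the surviving $k \leq j-1$ thresholds fall; summing the $(j-1)$ identical limiting contributions reproduces $\nu_V(E_j)$ exactly. Letting $L \to \infty$ via Theorems \ref{th1} and \ref{th3}, and handling the discontinuity points of the monotone functions involved by the countable-approximation argument used at the end of the proof of Proposition \ref{pr1}, yields \eqref{d32}. The main obstacle is precisely this perpendicular-channel threshold bookkeeping; once it is in place, the remainder of the proof assembles standard ingredients from Section \ref{s2} and the proof of Theorem \ref{th5}.
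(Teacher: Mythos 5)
Your projection scheme, the operator inequality obtained by splitting $V_\omega$ across the $P_j$/$Q_j$ blocks with Cauchy--Schwarz, and the identification \eqref{14} (with $\psi_j$ in place of $\psi_1$) of the $P_j$-block with $H^D_{\perp,C_L}+cW_{j,\omega}+E_j$ are exactly those of the paper's proof. The genuine difference is in how the perpendicular ($Q_j$) channel is disposed of. The paper works at finite $L$ with the \emph{difference} $N(H^D_{\omega,C_L};E_j+\lambda)-N(H^D_{\omega,C_L};E_j)$: after the block estimate, the surviving $Q_j$-contribution is an eigenvalue count of $\bigl(Q_jH^D_{0,C_L}Q_j\bigr)_{|\mathcal C_j}$ over an interval contained in $(E_{j-1},E_j)$ or $[E_j-M,E_{j+1})$, and the paper invokes a spectral-gap statement \eqref{d34} to conclude (in \eqref{d41}, \eqref{d45}) that this count is identically $0$ for $L$ large. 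You never subtract at finite $L$: you decompose $\bigl(Q_jH^D_{0,C_L}Q_j\bigr)_{|\mathcal C_j}$ into channels $H^D_{\perp,C_L}+E_k$, $k\neq j$, apply Theorem \ref{th3} with $W\equiv 0$ together with \eqref{d16} to each surviving channel $k\le j-1$, and identify the normalized limit of the whole perpendicular count as $(j-1)\frac{b_1\cdots b_m}{(2\pi)^m}=\nu_V(E_j)$, which you subtract afterwards. Both routes lead to \eqref{d32}, but yours is the more robust: for $n=0$ the Dirichlet box operator $H^D_{\perp,C_L}$ develops, as $L\to\infty$, not only eigenvalue clusters exponentially near the Landau levels but also a sub-extensive family of boundary modes in the gaps, so the literal claim in \eqref{d34} that the intersection with $(E_{j-1}+\mathcal Z(C_L),E_{j+1})$ is \emph{empty} is generically too strong; what is true, and all that is needed, is that this count is $o(L^d)$, and your approach bypasses this subtlety entirely by taking the $L\to\infty$ limit before subtracting. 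Your threshold bookkeeping — the inequalities on $\delta_\pm$ combined with the standing assumption $E_1+\Lambda_1>\mathcal E$ pushing all surviving shifted thresholds into the Landau gap $(0,\Lambda_1)$ — is correct. Two small points: the hypothesis $\lambda_*<(1+\delta_-^{-1})M$ is not actually used in your version (it matters only for the paper's interval-subtraction step), and the inline ``$\pm(\delta_\pm^{-1}-1)M$'' should read $+(\delta_+^{-1}-1)M$ for the upper bound and $-(1+\delta_-^{-1})M$ for the lower, as you in fact compute later in the paragraph.
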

 \begin{proof}
 The proof of \eqref{d32} is similar to the one of \eqref{d26}, so that we omit certain details.
    Introduce the orthogonal projection $P_j: L^2(C_L;\bhpa) \to L^2(C_L;\bhpa)$ by
$$
(P_j f)(x) = \langle f(x), \psi_j \rangle_{\bhpa} \psi_j \quad x \in C_L, \quad f \in L^2(C_L; \bhpa).
$$
 Set $Q_j : = I - P_j$, and
$$
{\mathcal D}_j : = P_j\,{\rm Dom}\,(H^{D}_{0,C_L}), \quad {\mathcal C}_j : = Q_j\,{\rm Dom}\,(H^{D}_{0,C_L}).
$$

The operator $\left(P_j H^{D}_{\omega,C_L} P_j\right)_{|{\mathcal D}_j}$ is unitarily equivalent to the operator $H_{\perp, C_L}^{D} + W_{j,\omega} + E_j$.
 Moreover, we have
    \bel{d34}
    \sigma\left(\left(Q_j H^{D}_{0,C_L} Q_j\right)_{|{\mathcal C}_j}\right) \cap (E_{j-1} + {\mathcal Z}(C_L), E_{j+1}) = \emptyset.
    \ee

Let us first prove the lower bound in \eqref{d32}. Bearing in mind the operator inequality
$$
H^{D}_{\omega, C_L} \leq P_j\left(H^{D}_{0, C_L} + (1+\delta_-)V_\omega\right)P_j + Q_j\left(H^{D}_{0, C_L} + (1+\delta_-^{-1})V_\omega\right)Q_j,
$$
 we find that the mini-max principle and the unitary equivalence of the operators $\left(P_j (H^{D}_{0,C_L} + (1+\delta_-)V_\omega)P_j\right)_{|{\mathcal D}_j}$
 and $H_{\perp, C_L}^{D} + (1+\delta_-)W_{j,\omega} + E_j$, entail
$$
N(H^{D}_{\omega, C_L};E_j + \lambda)  \geq
$$
    \bel{d35}
    {\rm Tr}\,\one_{(-\infty, \lambda)}(H_{\perp, C_L}^{D} + (1+\delta_-)W_{1,\omega}) +
    {\rm Tr}\,\one_{(-\infty, E_j+\lambda -(1+\delta_-^{-1}) M)}\left(\left(Q_j \left(H^{D}_{0, C_L}\right)Q_j\right)_{|{\mathcal C}_j}\right).
\ee
On the other hand, by  the non-negativity of $V_\omega$, and
    $$
    \inf \sigma\left(\left(P_j H^{D}_{0,C_L} P_j\right)_{|{\mathcal D}_j}\right)  = E_j + {\mathcal Z}(C_L) > E_j,
    $$
    we find that
    \bel{d39}
N(H^{D}_{\omega, C_L};E_j) \leq {\rm Tr}\,\one_{(-\infty, E_j)}\left(H^{D}_{0, C_L}\right) =
{\rm Tr}\,\one_{(-\infty, E_j)}\left(\left(Q_j H^{D}_{0,C_L} Q_j\right)_{|{\mathcal C}_j}\right).
    \ee
    Combining \eqref{d35} and \eqref{d39}, we get
    $$
    N(H^{D}_{\omega, C_L};E_j + \lambda) - N(H^{D}_{\omega, C_L};E_j) \geq
    $$
    \bel{d40}
    {\rm Tr}\,\one_{(-\infty, \lambda)}(H_{\perp, C_L}^{D} + (1+\delta_-)W_{1,\omega}) -
    {\rm Tr}\,\one_{[E_j+\lambda -(1+\delta_-^{-1}) M,E_j)}\left(\left(Q_j \left(H^{D}_{0, C_L}\right)Q_j\right)_{|{\mathcal C}_j}\right).
    \ee
    By our choice of $\delta_-$ and $\lambda_*$, and by \eqref{d31}, \eqref{d34}, and Lemma \ref{l1}, we find that there exists $L^-_0 \in (0, \infty)$ independent of $\lambda$ such
    that
    \bel{d41}
    {\rm Tr}\,\one_{[E_j+\lambda -(1+\delta_-^{-1}) M),E_j)}\left(\left(Q_j \left(H^{D}_{0, C_L}\right)Q_j\right)_{|{\mathcal C}_j}\right) = 0,
    \ee
    provided that $\lambda \in (0,\lambda_*)$, and $L \in (L^-_0,\infty)$. Now the lower bound in \eqref{d32} follows from \eqref{d40} and \eqref{d41},
    combined with \eqref{d22} and \eqref{d23}. \\
    Let us now prove the upper bound in \eqref{d32}. Using the operator inequality
$$
H^{D}_{\omega, C_L} \geq P_j\left(H^{D}_{0, C_L} + (1-\delta_+)V_\omega\right)P_j + Q_j\left(H^{D}_{0, C_L} + (1-\delta_+^{-1})V_\omega\right)Q_j,
$$
 we find that the mini-max principle and the unitary equivalence of the operators $\left(P_j (H^{D}_{0,C_L} + (1-\delta_+)V_\omega)P_j\right)_{|{\mathcal D}_j}$
 and $H_{\perp, C_L}^{D} + (1-\delta_+)W_{j,\omega} + E_j$, entail
$$
N(H^{D}_{\omega, C_L};E_j + \lambda)  \leq
$$
    \bel{d42}
    {\rm Tr}\,\one_{(-\infty, \lambda)}(H_{\perp, C_L}^{D} + (1-\delta_+)W_{1,\omega}) +
    {\rm Tr}\,\one_{(-\infty, E_j+\lambda +(\delta_+^{-1}-1) M)}\left(\left(Q_j \left(H^{D}_{0, C_L}\right)Q_j\right)_{|{\mathcal C}_j}\right).
\ee
On the other hand, the mini-max principle implies
    \bel{d43}
N(H^{D}_{\omega, C_L};E_j) \geq {\rm Tr}\,\one_{(-\infty, E_j)}\left(\left(Q_j H^{D}_{\omega,C_L} Q_j\right)_{|{\mathcal C}_j}\right) \geq
{\rm Tr}\,\one_{(-\infty, E_j-M)}\left(\left(Q_j H^{D}_{0,C_L} Q_j\right)_{|{\mathcal C}_j}\right).
    \ee
    Combining \eqref{d43} and \eqref{d44}, we get
    $$
    N(H^{D}_{\omega, C_L};E_j + \lambda) - N(H^{D}_{\omega, C_L};E_j) \leq
    $$
    \bel{d44}
    {\rm Tr}\,\one_{(-\infty, \lambda)}(H_{\perp, C_L}^{D} + (1-\delta_+)W_{1,\omega}) +
    {\rm Tr}\,\one_{[E_j-M,E_j+\lambda +(\delta_+^{-1}-1) M))}\left(\left(Q_j \left(H^{D}_{0, C_L}\right)Q_j\right)_{|{\mathcal C}_j}\right).
    \ee
    By our choice of $\delta_+$ and $\lambda_*$, \eqref{d34}, and Lemma \ref{l1}, we find that there exists $L^+_0 \in (0, \infty)$ independent of $\lambda$ such
    that
    \bel{d45}
    {\rm Tr}\,\one_{[E_j-M,E_j+\lambda +(\delta_+^{-1}-1) M))}\left(\left(Q_j \left(H^{D}_{0, C_L}\right)Q_j\right)_{|{\mathcal C}_j}\right) = 0,
    \ee
    provided that $\lambda \in (0,\lambda_*)$, and $L \in (L_0^+,\infty)$. Now the upper bound in \eqref{d32} follows from \eqref{d43} and \eqref{d44},
    combined with \eqref{d22} and \eqref{d23}. \\
    \end{proof}

\section{Applications}
\label{s4} \setcounter{equation}{0}
The applications of Theorem \ref{th5} (see Theorem \ref{th2} with $j =1$, and Theorem \ref{th7} below), concern the asymptotic behavior of $\nu_V(E)$  as $E \downarrow E_1$.
    As discussed in the Introduction, this behavior is characterized by a very rapid decay which
    usually goes under the name {\em Lifshits tail}. The application of Theorem \ref{th6} (see Theorem \ref{th2} with $j\geq 2$ below) deals with the internal Lifshits tails, i.e. with
    the asymptotic behavior of $\nu_V(E)$ as $E \downarrow E_j$ with $j \geq 2$, provided that \eqref{d30} - \eqref{d31} hold true.\\
    Assume that $V\omega$ is as in \eqref{2}. For $x \in \re^{d}$ and $j \in {\mathcal J}$ define the functions
    $$
    w_j(x) : = \int_{\re^{\ell}} v(x, y) \psi_j(y)^2 dy,
    $$
    and
    \bel{4}
    W_{j,\omega}(x) : = \sum_{\xi \in \Z^{d}} \lambda_\xi(\omega) w_j(x - \xi),
    \ee
    the one-site potential $v$ and the i.i.d. random variables $\{\lambda_\xi(\omega)\}_{\xi \in \Z^{d}}$ being the same as in \eqref{2}.
    Let $F(E) : = {\mathbb P}(\left\{\omega \in \Omega \, | \, \lambda_0(\omega) < E\right\})$, $E \in \re$. We
    suppose  that there exist  $E_0 \in (0,\infty)$ and $\kappa > 0$ such that
     \bel{d53}
     {\rm supp}\, F = [0,E_0],
     \ee
     and
    \bel{d54}
    F(E) \asymp E^\kappa, \quad E \downarrow 0.
    \ee
    {\em Remark}: In many particular cases, assumptions \eqref{d53} - \eqref{d54} could be relaxed. We state them here in a form which sometimes is too restrictive,
    just for the sake of the simplicity of exposition.
    \subsection{Surface Lifshits tails for magnetic quantum Hamiltonians}
    \label{ss41}
    In this subsection we assume that the unperturbed Hamiltonian $H_0$ is as in \eqref{d10}; in particular, $m = 1$, $n=0$, and $d = 2$, $\ell = 1$.
    We recall that in this case the discrete spectrum of $\hpa$ is simple.

    \begin{theorem} \label{th2}
     Let $j \in {\mathcal J}$. If $j \geq 2$ assume that \eqref{d31} holds true. Suppose that $V_\omega$ is of form \eqref{2} and satisfies ${\bf H_1} - {\bf H_3}$.\\
    {\rm (i)} Assume that
    $$
    c_- (1 + |x|)^{-\varkappa} \leq w_j(x) \leq c_+ (1 + |x|)^{-\varkappa}, \quad x \in \rd,
    $$
    for some $\varkappa > 2$, and $c_+ \geq c_- > 0$. Then we have
    $$
    \lim_{\lambda \downarrow 0} \frac{\ln{|\ln{\nu_V(E_j + \lambda) - \nu_V(E_j)|}}}{\ln{\lambda}} = - \frac{2}{\varkappa - 2}.
    $$
    {\rm (ii)} Assume that
    $$
    \frac{e^{-c_+|x|^\beta}}{c_+} \leq w_(x) \leq \frac{e^{-c_-|x|^\beta}}{c_-}, \quad x \in \rd,
    $$
    for some $\beta \in (0,2]$, and $c_+ \geq c_- > 0$. Then we have
    $$
    \lim_{\lambda \downarrow 0} \frac{\ln{|\ln{\nu_V(E_j + \lambda)- \nu_V(E_j)|}}}{\ln{|\ln{\lambda}|}} = 1 + \frac{2}{\beta}.
    $$
    {\rm (iii)} Assume that
    $$
    \frac{\one_S(x)}{c_+} \leq w_j(x) \leq \frac{e^{-c_-|x|^2}}{c_-}, \quad x \in \rd,
    $$
    for an open non-empty set $S \subset \rd$, and $c_+ \geq c_- > 0$. Then we have
    $$
    \lim_{\lambda \downarrow 0} \frac{\ln{|\ln{\nu_V(E_j + \lambda)- \nu_V(E_j)|}}}{\ln{|\ln{\lambda}|}} = 2.
    $$
    \end{theorem}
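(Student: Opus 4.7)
The plan is to reduce the estimation of $\nu_V(E_j + \lambda) - \nu_V(E_j)$ (with the convention $\nu_V(E_1) := 0$) to the study of the IDS ${\mathcal N}_{W_j}$ of the two-dimensional random ergodic Landau-type Hamiltonian $\hpe + W_{j,\omega}$, and then to invoke the magnetic Lifshits tail estimates recalled in the Introduction (due to Klopp--Raikov \cite{klorai, klopp}, and to \cite{erd, erd1, bhkl} for the Poissonian ancestors). In the leading example \eqref{d10} we have $d = 2$, $m = 1$, $n = 0$, so $\hpe$ is the shifted Landau operator with spectrum $\{\Lambda_q\}_{q \in \Z_+} = \{2bq\}_{q \in \Z_+}$, and $\hpa$ is a 1D Schr\"odinger operator whose discrete eigenvalues $E_j$ are automatically simple; thus the simplicity hypothesis required by Theorem \ref{th5} is free, and \eqref{d30} is automatic.

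First, for $j = 1$, Theorem \ref{th5} yields, for any admissible $\delta \in (0,1)$,
$$
{\mathcal N}_{W_1}(\lambda) \; \leq \; \nu_V(E_1 + \lambda) \; \leq \; {\mathcal N}_{(1-\delta)W_1}(\lambda), \quad \lambda \in (0,\lambda_*].
$$
For $j \geq 2$, Theorem \ref{th6} (whose hypotheses \eqref{d30}--\eqref{d31} are in force by the remarks above and by the assumption of the theorem) gives the analogous two-sided bound for $\nu_V(E_j + \lambda) - \nu_V(E_j)$ with multiplicative constants $(1 \pm \delta_\pm)$ in front of $W_j$. Since multiplying the alloy-type potential $W_{j,\omega}$ of \eqref{4} by a fixed positive constant $c$ amounts to a rescaling of the distribution of $\lambda_0$ which preserves \eqref{d53}--\eqref{d54} with the same exponent $\kappa$, these constants are irrelevant at the double-logarithmic scale on which the conclusions are formulated. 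Hence it suffices to establish the claimed three asymptotic formulas for ${\mathcal N}_{W_j}(\lambda)$ itself as $\lambda \downarrow 0$.

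Now $\hpe + W_{j,\omega}$ is precisely the setting of \cite{klorai, klopp}: a 2D Landau Hamiltonian perturbed by an alloy potential $W_{j,\omega}(x) = \sum_\xi \lambda_\xi(\omega) w_j(x - \xi)$ with i.i.d. non-negative couplings whose distribution satisfies \eqref{d53}--\eqref{d54}, and single-site profile $w_j$ whose two-sided decay bounds are inherited from those imposed on $w_j$ in each case (i)--(iii). Applying the corresponding theorems of \cite{klorai, klopp} one obtains: in case (i), $\ln|\ln {\mathcal N}_{W_j}(\lambda)|/\ln \lambda \to -2/(\varkappa - 2)$; in case (ii), ${\mathcal N}_{W_j}(\lambda) \sim \exp(-c|\ln\lambda|^{1 + 2/\beta})$ on the double-logarithmic scale, giving the exponent $1 + 2/\beta$; and in case (iii), for compactly supported or Gaussian single site, the sharper Gaussian-type behavior ${\mathcal N}_{W_j}(\lambda) \sim \exp(-c|\ln\lambda|^2)$, whence the limit equals $2$. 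Combined with the reduction of the previous paragraph, this furnishes the three assertions.

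The main delicate step is the transfer of the decay assumptions on $w_j$ into the hypotheses of the Lifshits tail theorems of \cite{klorai, klopp}, since one only has two-sided comparisons $c_- \tau \leq w_j \leq c_+ \tau$ with a template $\tau$; this is handled by monotonicity of $\lambda \mapsto {\mathcal N}_W(\lambda)$ in $W$ and by the scale invariance of the Lifshits exponent under multiplicative perturbation of $W$. Secondary, but routine, points are the verification that $E_1 + \Lambda_1 > \mathcal{E} = 0$ (required by Theorem \ref{th6} for $j \geq 2$), which follows from \eqref{d31} together with the structure of the spectrum of $\hpa$, and the observation that the Landau-level structure of $\hpe$ is $j$-independent, so that the same Klopp--Raikov asymptotics applies at each internal threshold $E_j$.
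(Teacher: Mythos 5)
Your proof follows exactly the route the paper takes: Theorems \ref{th5} and \ref{th6} sandwich $\nu_V(E_j+\lambda)-\nu_V(E_j)$ between ${\mathcal N}_{(1\pm\delta)W_j}(\lambda)$, and then the Landau Lifshits-tail asymptotics of \cite{klorai,klopp} (the paper's Theorem \ref{th4}) give the three double-logarithmic limits; you also correctly spell out the point the paper leaves implicit, namely that the multiplicative factors $(1\pm\delta_\pm)$ drop out at that scale because rescaling the alloy coupling preserves \eqref{d53}--\eqref{d54}. One small slip: the hypothesis $E_1+\Lambda_1>{\mathcal E}$ needed to invoke Theorem \ref{th6} is a separate standing assumption of the magnetic setting (not a consequence of \eqref{d31}, which only bounds $M$), though the paper is equally silent about it in the statement of the theorem.
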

    Theorem \ref{th2} follows immediately from Theorems \ref{th5} - \ref{th6}, and the following result concerning
    the Lifshits tails for the IDS ${\mathcal N}_{W_j}$:

\begin{theorem} \label{th4} {\rm \cite{klorai, klopp} }
{\rm (i)} Under the assumptions of Theorem \ref{th2} (i) we have
    $$
    \lim_{\lambda \downarrow 0} \frac{\ln{|\ln{{\mathcal N}_{W_j}(\lambda)|}}}{\ln{\lambda}} = - \frac{2}{\varkappa - 2}.
    $$
    {\rm (ii)} Under the assumptions of Theorem \ref{th2} (ii) we have
    $$
    \lim_{\lambda \downarrow 0} \frac{\ln{|\ln{{\mathcal N}_{W_j}(\lambda)|}}}{\ln{|\ln{\lambda}|}} = 1 + \frac{2}{\beta}.
    $$
    {\rm (iii)}Under the assumptions of Theorem \ref{th2} (iii) we have
    $$
    \lim_{\lambda \downarrow 0} \frac{\ln{|\ln{{\mathcal N}_{W_j}(\lambda)|}}}{\ln{|\ln{\lambda}|}} = 2.
    $$
\end{theorem}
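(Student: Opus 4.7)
The plan is to combine two standard halves of a Lifshits-tails proof, both adapted to exploit the structure induced by the constant magnetic field. The key object is the spectral projection $p_0$ of $\hpe$ onto the lowest Landau level $\Lambda_0=0$; it has the explicit Gaussian integral kernel
\[
p_0(x,y)=\frac{b}{2\pi}\,e^{-b|x-y|^2/4}\,e^{i\phi(x,y)},\qquad x,y\in\rd,
\]
with a gauge phase $\phi$. Because of the spectral gap $\Lambda_1-\Lambda_0=2b>0$, a Feshbach--Schur argument in the decomposition $I=p_0+q_0$ with $q_0:=I-p_0$ shows that for every $\delta\in(0,1)$ and all sufficiently small $\lambda>0$,
\[
{\mathcal N}_{(1-\delta)p_0W_{j,\omega}p_0}(\lambda)\leq{\mathcal N}_{W_j}(\lambda)\leq{\mathcal N}_{(1+\delta)p_0W_{j,\omega}p_0}(\lambda).
\]
Since the Lifshits asymptotics in question are logarithmic in nature and unaffected by multiplicative factors $1\pm\delta$, it suffices to compute the IDS asymptotics of the Toeplitz-type operator $p_0W_{j,\omega}p_0$ acting in $\operatorname{Ran}p_0$, which is an ergodic random operator on a Bargmann-type space.

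For the upper bound, one performs Dirichlet--Neumann bracketing on a cube $C_L$ whose side length is optimized in each regime, reducing matters to estimating ${\mathbb P}(\lambda_1(\hpe+W_{j,\omega},C_L)\leq\lambda)$ and then, via the reduction above, to ${\mathbb P}(\langle\varphi_z,W_{j,\omega}\varphi_z\rangle\lesssim\lambda)$ for a coherent state $\varphi_z\in\operatorname{Ran}p_0$ centered at some $z\in C_L$. Expanding $W_{j,\omega}=\sum_{\xi\in\Z^2}\lambda_\xi w_j(\cdot-\xi)$ and using the non-negativity of $w_j$ together with \eqref{d54}, this event forces many coupling constants to be small; the effective range of relevant $\xi$ is the radius $R(\lambda)$ at which the contribution of one site to $\langle\varphi_z,W_{j,\omega}\varphi_z\rangle$ drops below $\lambda$. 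In case (i) the polynomial decay of $w_j$ dominates over the Gaussian scale of $\varphi_z$ and gives $R(\lambda)\asymp\lambda^{-1/(\varkappa-2)}$, producing the exponent $-2/(\varkappa-2)$. In cases (ii) and (iii) the decay of $w_j$ is at least exponential, so $R(\lambda)\asymp(\ln\lambda^{-1})^{1/\beta}$ (with $\beta=2$ in case (iii)), producing the double-log exponents $1+2/\beta$ and $2$ respectively. In every case, the probabilistic cost of forcing $\lambda_\xi\lesssim\lambda$ over a region of cardinality $R(\lambda)^2$ is, by independence and \eqref{d54}, of order $\lambda^{\kappa R(\lambda)^2}$, and two logarithms reproduce the asserted limits.

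The lower bound is a matching trial-state construction. Fix $\epsilon>0$ and $R>0$, condition on the event $\{\lambda_\xi\leq\epsilon\text{ for all }\xi\in\Z^2\cap C_R\}$ which by independence and \eqref{d54} has probability $\gtrsim\epsilon^{\kappa R^2}$, and use a coherent state $\varphi_0\in\operatorname{Ran}p_0$ centered at the origin as trial function in Temple's inequality. The expectation $\langle\varphi_0,W_{j,\omega}\varphi_0\rangle$ is controlled by $C\epsilon+T(R)$, where $T(R)$ is the contribution from sites $|\xi|\geq R$: $T(R)\asymp R^{2-\varkappa}$ in case (i), $T(R)\asymp e^{-cR^\beta}$ in case (ii), and $T(R)\asymp e^{-cR^2}$ in case (iii). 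Equating $\epsilon$ with $T(R)$ and optimizing reproduces exactly the exponents from the upper bound. The principal technical obstacle is the reduction to the Toeplitz operator $p_0W_{j,\omega}p_0$ uniformly in the disorder: unlike in the non-magnetic case one cannot lower kinetic energy by spreading trial states, so the Gaussian scale $b^{-1/2}$ of the Landau eigenfunctions enters every estimate, and the off-diagonal coupling $p_0W_{j,\omega}q_0$ has to be controlled uniformly in $\omega$ via the gap $\Lambda_1-\Lambda_0=2b$; this Toeplitz reduction and the associated singular-value asymptotics are the technical heart of \cite{klorai,klopp}.
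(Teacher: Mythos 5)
Theorem~\ref{th4} is not proved in the paper; it is quoted directly from \cite{klorai,klopp}, so there is no in-paper argument to compare you against. Your outline is a faithful sketch of the strategy actually used in those references: project onto the lowest Landau level using the spectral gap $\Lambda_1-\Lambda_0=2b$ (a Schur-complement reduction exactly parallel to the proof of Theorem~\ref{th5} in this paper), pass to the Toeplitz operator $p_0W_{j,\omega}p_0$ on Bargmann space, and close the estimate by matching an upper bound (Dirichlet bracketing, coherent-state Chebyshev estimates, and i.i.d.\ large deviations via \eqref{d54}) against a lower bound (trial states in Temple's inequality). Two small points deserve attention. First, your sandwich inequality is stated backwards: since $W_{j,\omega}\geq 0$, a \emph{larger} potential lifts the spectrum and so \emph{decreases} the counting function, whence the correct two-sided bound is
\[
\mathcal{N}_{(1+\delta)p_0W_{j,\omega}p_0}(\lambda)\ \leq\ \mathcal{N}_{W_j}(\lambda)\ \leq\ \mathcal{N}_{(1-\delta)p_0W_{j,\omega}p_0}(\lambda).
\]
This is harmless for double-logarithmic Lifshits asymptotics, which are insensitive to constant multiplicative factors, but the argument should read the right way round. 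Second, it is worth stating explicitly why in case (iii) the answer is governed by $\beta=2$ even when $w_j$ decays strictly faster than Gaussian or has compact support: the Toeplitz symbol $\langle\varphi_z,w_j(\cdot-\xi)\varphi_z\rangle$ is smeared against the Gaussian kernel $\frac{b}{2\pi}e^{-b|\cdot|^2/4}$ of $p_0$, so the effective single-site range can never shrink below the magnetic length $b^{-1/2}$ and the effective decay in $|z-\xi|$ is at most Gaussian no matter how localized $w_j$ is. This is precisely the mechanism that distinguishes case (iii) as the quantum regime in \cite{klorai}; your parenthetical ``with $\beta=2$ in case (iii)'' silently assumes it, and a reader who does not know this would reasonably expect, say, compactly supported $w_j$ to yield a different exponent. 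Subject to these two corrections, the outline is an accurate description of what is proved in \cite{klorai,klopp}.
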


\subsection{Lifshits tails for non-magnetic quantum Hamiltonians}
    \label{ss42}
In this subsection we assume that the unperturbed Hamiltonian $H_0$ is as in \eqref{d51}; in particular, $B = 0$ and $d, \ell \in \N$ are arbitrary.
    In this case the lowest eigenvalue $E_1$ of $\hpa$ is simple. \\

    \begin{theorem} \label{th7}
     Suppose that $V_\omega$ is of form \eqref{2} and satisfies ${\bf H_1} - {\bf H_3}$.\\
    {\rm (i)} Assume that
    $$
    c_- (1 + |x|)^{-\varkappa} \leq w_1(x) \leq c_+ (1 + |x|)^{-\varkappa}, \quad x \in \re^d,
    $$
    for some $\varkappa \in (d,d+2)$, and $c_+ \geq c_- > 0$. Then we have
    $$
    \lim_{\lambda \downarrow 0} \frac{\ln{|\ln{\nu_V(E_1 + \lambda)|}}}{\ln{\lambda}} = - \frac{d}{\varkappa - d}.
    $$
    {\rm (ii)} Assume that
    $$
    c_- \one_S(x) \leq  w_1(x) \leq c_+ (1 + |x|)^{-d - 2}, \quad x \in \re^d,
    $$
    for some open non-empty set $S \subset \re^{d}$, and $c_\pm > 0$. Then we have
    $$
    \lim_{\lambda \downarrow 0} \frac{\ln{|\ln{\nu_V(E_1 + \lambda)|}}}{\ln{\lambda}} = - \frac{d}{2}.
    $$
    \end{theorem}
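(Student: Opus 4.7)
The plan is to follow the same scheme as for Theorem \ref{th2}, replacing the magnetic Lifshits asymptotics of Theorem \ref{th4} by their non-magnetic counterparts. Since $E_1$ is a simple eigenvalue of $\hpa$ under the assumptions of this subsection (see the discussion after \eqref{d51a}), Theorem \ref{th5} applies and yields, for $\lambda$ small enough,
\[
{\mathcal N}_{W_1}(\lambda) \;\leq\; \nu_V(E_1+\lambda) \;\leq\; {\mathcal N}_{(1-\delta)W_1}(\lambda),
\]
with $\delta \in (0,1)$ suitably chosen, and $W_{1,\omega}(x) = \sum_{\xi\in\Z^d} \lambda_\xi(\omega)\, w_1(x-\xi)$ as in \eqref{4}. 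Hence it suffices to establish two-sided Lifshits asymptotics for the reduced ergodic non-magnetic operator $\hpe + W_{1,\omega} = -\Delta_x + W_{1,\omega}$ on $L^2(\R^d)$, and to verify that these asymptotics are stable under the positive rescaling $W_1 \mapsto (1-\delta) W_1$.

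First I would check the basic spectral setup. Since $w_1 \geq 0$, $\lambda_\xi \in [0,E_0]$ almost surely by \eqref{d53}, and $\inf \sigma(-\Delta) = 0$, a standard argument based on \eqref{d54} (which implies $\mathbb{P}(\lambda_\xi < \varepsilon) > 0$ for every $\varepsilon > 0$) gives $\inf \sigma(-\Delta + W_{1,\omega}) = 0$ almost surely, so the spectral edge for the IDS ${\mathcal N}_{W_1}$ is $0$, matching the edge $E_1$ for $\nu_V$ in the sandwich above. The decay properties of the effective single-site potential $w_1(x) = \int_{\R^\ell} v(x,y)\, \psi_1(y)^2\, dy$ are inherited from those of $v$ in each case, using that $\psi_1$ is $L^2$-normalized and, under the standing hypotheses on $U$, exponentially decaying, so the $y$-integral does not destroy the stated decay rates in $x$.

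Next I would invoke the classical Lifshits tail results for non-magnetic alloy-type Hamiltonians $-\Delta + W_{1,\omega}$ under \eqref{d54}. In case (i), when $w_1$ satisfies two-sided bounds of order $|x|^{-\varkappa}$ with $\varkappa \in (d, d+2)$, the standard estimates (see, e.g., \cite{kw} and the references therein) yield $\lim_{\lambda \downarrow 0} \ln|\ln {\mathcal N}_{W_1}(\lambda)|/\ln \lambda = -d/(\varkappa - d)$, the classical long-range regime. In case (ii), when $w_1$ is bounded below by an indicator of a non-empty open set and above by $(1+|x|)^{-d-2}$, the same sources give the quantum exponent $-d/2$. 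Both asymptotics are insensitive to a positive scalar prefactor, since replacing $W_1$ by $(1-\delta) W_1$ with $\delta \in (0,1)$ changes the implicit constant in the Lifshits estimate but not the exponent, and this constant disappears after taking the double logarithm. Inserting these asymptotics into the sandwich above produces the claimed exponents for $\nu_V$.

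The main obstacle is essentially bookkeeping: one must check that the specific $W_{1,\omega}$ and the common distribution of the $\lambda_\xi$ satisfy the exact hypotheses of the cited non-magnetic Lifshits tail theorems, namely the upper and lower decay of $w_1$ (the upper bound coming from the rapid decay of $\psi_1$, the lower bound from positivity of $v$ and $\psi_1$ on suitable sets), the non-triviality and non-negativity of $w_1$, and the tail condition at $0$ for $\lambda_0$ encoded in \eqref{d54}. Once these points are checked, the theorem is a direct consequence of Theorem \ref{th5} together with the non-magnetic Lifshits tail estimates, without any new analytic input beyond what is in Section 4.
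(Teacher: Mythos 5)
Your proposal matches the paper's proof exactly: sandwich $\nu_V(E_1+\lambda)$ between ${\mathcal N}_{W_1}(\lambda)$ and ${\mathcal N}_{(1-\delta)W_1}(\lambda)$ via Theorem \ref{th5}, then invoke the classical non-magnetic Lifshits tail asymptotics for $-\Delta + W_{1,\omega}$ (the paper cites Kirsch--Martinelli and Kirsch--Simon, packaged as Theorem \ref{th8}), noting that the $(1-\delta)$ factor is absorbed after the double logarithm. Your extra remarks about the decay of $w_1$ being inherited from $v$ are harmless but unnecessary, since the stated hypotheses of Theorem \ref{th7} already concern $w_1$ directly.
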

    Theorem \ref{th7} follows immediately from Theorem \ref{th5}, and the following, nowadays classical, results concerning
    the Lifshits tails for the IDS ${\mathcal N}_{W_j}$:

\begin{theorem} \label{th8}
{\rm (i)} {\rm \cite{km1} } Under the assumptions of Theorem \ref{th7} (i) we have
$$
    \lim_{\lambda \downarrow 0} \frac{\ln{|\ln{{\mathcal N}_{W_1}(\lambda)|}}}{\ln{\lambda}} = - \frac{d}{\varkappa - d}.
    $$
    {\rm (ii)} {\rm \cite{ks} } Under the assumptions of Theorem \ref{th7} (ii) we have
    $$
    \lim_{\lambda \downarrow 0} \frac{\ln{|\ln{{\mathcal N}_{W_1}(\lambda)|}}}{\ln{\lambda}} = - \frac{d}{2}.
    $$
\end{theorem}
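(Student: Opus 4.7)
The goal is to establish, for $\lambda \downarrow 0$, two-sided bounds of the form
$$
\exp\!\bigl(-c_2 \lambda^{-\gamma} |\ln \lambda|\bigr) \;\leq\; \mathcal{N}_{W_1}(\lambda) \;\leq\; \exp\!\bigl(-c_1 \lambda^{-\gamma}|\ln \lambda|\bigr),
$$
with $\gamma = d/(\varkappa - d)$ in part (i) and $\gamma = d/2$ in part (ii); taking $\ln|\ln \cdot|$ then produces the stated limits, since the $|\ln\lambda|$ factor contributes only an asymptotically negligible $\ln|\ln\lambda|$ correction. The common starting point is the thermodynamic representation \eqref{d22}, combined with Dirichlet-Neumann bracketing for $-\Delta$ on cubes $C_L$: Dirichlet counts give an upper bound on $L^d\,\mathcal{N}_{W_1}(\lambda)$, while Neumann counts (via the constant trial state, whose Rayleigh quotient equals the average of $W_{1,\omega}$ over $C_L$) give a lower bound. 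In both cases the task reduces to estimating the probability that $W_{1,\omega}$ is uniformly small on a suitable subset of $\Z^d$, using \eqref{d54}.

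In Case (ii), fast decay, I pick $L_\lambda \sim \lambda^{-1/2}$, which matches the Dirichlet ground-state energy of $-\Delta$ on $C_{L_\lambda}$ to $\lambda$. Using Chebyshev to count low-lying eigenvalues and the lower bound $w_1 \geq c_-\one_S$, the event $\{E_1(H^{D}_{\perp,C_{L_\lambda}} + W_{1,\omega}) < \lambda\}$ forces $\lambda_\xi \lesssim \lambda$ for all lattice sites $\xi$ in a neighbourhood of $C_{L_\lambda}$ of cardinality $\asymp L_\lambda^d$; by \eqref{d54} this event has probability at most $\exp(-c\,L_\lambda^d\,|\ln\lambda|)$, giving the upper bound. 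Conversely, the hypothesis $w_1(x) \lesssim (1+|x|)^{-d-2}$ ensures that couplings outside an enlarged box of side $3L_\lambda$ contribute only $O(L_\lambda^{-2}) = O(\lambda)$ to $W_{1,\omega}|_{C_{L_\lambda}}$; forcing the $O(L_\lambda^d)$ couplings inside to be $\leq c\lambda$ therefore yields $W_{1,\omega} \leq C\lambda$ on $C_{L_\lambda}$, so the constant Neumann trial function produces an eigenvalue below $C\lambda$ and delivers the matching lower bound.

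In Case (i), slow decay, with $w_1(x) \asymp |x|^{-\varkappa}$ and $\varkappa \in (d,d+2)$, the potential is not integrable enough to localize the analysis at scale $\lambda^{-1/2}$. The contribution to $W_{1,\omega}(x)$ from couplings $\lambda_\xi$ at distance $|\xi| > R$ is of order $R^{d-\varkappa}$ (times the mean coupling), so controlling $W_{1,\omega} \leq \lambda$ requires suppressing $\lambda_\xi \lesssim \lambda$ out to the larger radius $R_\lambda \sim \lambda^{-1/(\varkappa - d)}$. This affects $\asymp R_\lambda^d \sim \lambda^{-d/(\varkappa-d)}$ sites, yielding a probability of suppression $\asymp \exp(-c\,R_\lambda^d\,|\ln\lambda|)$. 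For the upper bound I perform Dirichlet bracketing on a box of size $L \sim \lambda^{-1/2}$ and use a Chebyshev / Temple-type inequality, combined with $w_1 \geq c_-(1+|x|)^{-\varkappa}$, to show that $E_1 < \lambda$ already forces suppression on a radius-$R_\lambda$ ball. For the lower bound, a compactly supported trial state on a small kinetic box embedded in the radius-$R_\lambda$ ball of suppressed couplings produces an eigenvalue below $C\lambda$.

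The technical heart is case (i): the kinetic length scale $\lambda^{-1/2}$ and the potential-suppression scale $\lambda^{-1/(\varkappa-d)}$ are genuinely different, so the Lifshits exponent cannot be read off from a single finite-volume estimate. One must combine a short-range Dirichlet bracketing (encoding the kinetic penalty) with long-range control of $W_{1,\omega}$ by couplings on the much larger radius-$R_\lambda$ ball, and verify that such suppression is simultaneously necessary (for the upper bound) and essentially sufficient (for the lower bound) for $E_1$ to lie below $\lambda$. Balancing the two scales at the precise exponent $d/(\varkappa-d)$ is the content of \cite{km1}; case (ii) is the classical Pastur-type Lifshits analysis of \cite{ks}.
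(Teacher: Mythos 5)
The paper does not prove Theorem~\ref{th8}; it explicitly cites it as a classical result, part (i) from \cite{km1} (Kirsch--Martinelli, via large deviations) and part (ii) from \cite{ks} (Kirsch--Simon). There is therefore no proof in the paper against which to compare your argument: the theorem's role in Section~5.2 is purely that of an imported input, to be combined with Theorem~\ref{th5} to yield Theorem~\ref{th7}, exactly parallel to the way the magnetic results of Klopp--Raikov and Klopp are imported as Theorem~\ref{th4}.

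That said, your sketch is a fair outline of the ideas underlying the cited references: Dirichlet--Neumann bracketing at the kinetic scale $\lambda^{-1/2}$, control of the low-lying spectrum via the probability that the couplings $\lambda_\xi$ are simultaneously small on a large region, and, in the slow-decay regime $\varkappa\in(d,d+2)$, the appearance of a second, longer length scale $R_\lambda\sim\lambda^{-1/(\varkappa-d)}$ dictated by the non-integrable tail of $w_1$, which is what shifts the Lifshits exponent from $d/2$ to $d/(\varkappa-d)$. Two caveats. First, your upper bound argument in case (i) is stated too loosely: after Dirichlet bracketing at scale $L\sim\lambda^{-1/2}$ the operator on $C_L$ still depends on all couplings $\lambda_\xi$, not just those in a neighbourhood of $C_L$, so ``$E_1<\lambda$ forces suppression on a radius-$R_\lambda$ ball'' is not a consequence of a single-box estimate and requires the large-deviation machinery of \cite{km1}. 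Second, in case (ii) your lower bound correctly uses the tail bound $w_1(x)\lesssim(1+|x|)^{-d-2}$ to control distant couplings, but one should note that the hypothesis $\varkappa=d+2$ (not $\varkappa>d+2$) is precisely the borderline at which this tail contributes an error of the same order $O(\lambda)$ as the target energy, so the constants must be tracked; this is handled in \cite{ks}. Since the paper leaves the proof to the references, these gaps are not gaps relative to the paper, but they are the places where your two-sentence summaries stand in for the actual content of \cite{km1} and \cite{ks}.
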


{\em Remark}: The first (resp., the second) part of Theorem \ref{th7} recovers in our particular setting the result of \cite[Theorem 1.5]{kw}
(resp., \cite[Theorem 1.4]{kw}).\\

{\bf Acknowledgements}.
The major part of this work has been done during the second named author's visits to the University of Hagen,
Germany, in December 2014 - January 2015 and in February 2016, and to the Isaac Newton Institute, Cambridge, UK, in January-February 2015.
He thanks these institutions for financial support and hospitality. Both authors gratefully acknowledge the partial
support of the Chilean Scientific Foundation {\em Fondecyt}
under Grant 1130591, and of {\em N\'ucleo Milenio de F\'isica Matem\'atica} RC120002. The first named author gratefully acknowledges the hospitality extended to
him during his visits to the Pontificia Universidad Cat\'olica de Chile in 2014 and 2015.\\

{\sc Werner Kirsch}\\
Fakult\"at f\"ur Mathematik und Informatik\\
FernUniversit\"at in Hagen\\
Universit\"atsstrasse 1\\
D-58097 Hagen, Germany\\
E-mail: werner.kirsch@fernuni-hagen.de\\

{\sc Georgi Raikov}\\
Facultad de Matem\'aticas\\
Pontificia Universidad Cat\'olica de Chile\\
Av. Vicu\~na Mackenna 4860\\ Santiago de Chile\\
E-mail: graikov@mat.uc.cl

\end{document}